\renewcommand\title[1]{\bf \hskip2.25pc \parbox{.8\textwidth}{ \noindent%
   \LARGE \bf \begin{center} #1 \end{center} \rm } \vskip.1in \rm\normalsize }
\newtheorem{lemma}{\textbf{Lemma}}
\newtheorem{theorem}{\textbf{Theorem}}
\newtheorem{definition}{\textbf{Definition}}
\newtheorem{fact}{\textbf{Fact}}
\newcommand{\payoff}{\mathrm{payoff}}
\begin{document}

\title{
\Large Quantum Computing for Option Portfolio Analysis}

\author{Yusen Wu}
\affiliation{
Department of Physics, The University of Western Australia, Perth, WA 6009, Australia
}

\author{Jingbo B. Wang}
\email{jingbo.wang@uwa.edu.au}
\affiliation{
Department of Physics, The University of Western Australia, Perth, WA 6009, Australia
}

\author{Yuying Li}
\affiliation{
Cheriton School of Computer Science, University of Waterloo, Waterloo, Canada}

\date{\today}

\begin{abstract}
In this paper, we introduce an efficient and end-to-end quantum algorithm tailored for computing the Value-at-Risk (VaR) and conditional Value-at-Risk (CVar) for a portfolio of European options. 
Our focus is on leveraging quantum computation to overcome the challenges posed by high dimensionality in VaR and CVaR estimation. 
While our innovative quantum algorithm is designed primarily for estimating portfolio VaR and CVaR for European options, we also investigate the feasibility of applying a similar quantum approach to price American options. Our analysis reveals a quantum 'no-go' theorem within the current algorithm, highlighting its limitation in pricing American options.  Our results indicate the necessity of investigating alternative strategies to resolve the complementarity challenge in pricing American options in future research.

\end{abstract}

\maketitle

\section{Introduction}
Financial institutions such as banks and insurance companies often manage extensive portfolios of financial derivatives that require regular risk assessment.
The Basel Accords have embraced a measure of market tail risk in global banking regulation, including using traditional Value-at-Risk (VaR) and Conditional Value-at-Risk (CVaR).
For a given time horizon $[0,\bar{t}]$, where $\bar{t}$ could be a day, a week or longer from now and before the expiry of any option in the portfolio, VaR is the maximum potential loss within a specified confidence level (e.g. 95\%), and CVaR represents the expected value of losses that exceed the VaR threshold. 
Regularly calculating such risk values is crucial and often required by regulatory bodies and institutional risk management.
As an example, an insurance company needs to compute VaR/CVaR of a large portfolio of hundreds of thousands of (potentially very complex) option contracts, as discussed in \cite{Brodie11,Gan15,Gan20,Lucio22}.
Estimating the risk of a large portfolio of options is a computationally challenging task faced by financial institutions and insurance companies, because computing VaR and CVaR for a large portfolio of complex options
is a computationally intensive task in classical computing.  

The aforementioned challenge can be appreciated by considering the following. Computing VaR of a portfolio of financial options requires calculating the time $\bar{t}$ value of each option for each possible underlying price realization at $\bar{t}$.  Following classical no-arbitrage option pricing theory, the fair value of an option is derived from the value of the underlying asset $S$ and contract specification parameters, such as strikes and time to maturity. At any given future time  $t>0$,  the fair value of the option is some nonlinear function $V(S,t)$ of the underlying price $S$ and time $t$. 
Since the underlying price $S_t$ follows a stochastic process, it exhibits a  behavior characterized by random fluctuations over time.
Consequently, the distribution of the value of a portfolio of options is determined by distributions of option values in the portfolio, which are given by distributions of underlying assets and corresponding option value functions.  
Calculating VaR/CVaR of a portfolio of options at a future time horizon $\bar{t}>0$ requires the availability of joint distributions of all underlying assets and all option value functions. In general,  VaR analysis consists of two significant components:  (1) simulation of the time $\bar{t}$ price joint distributions of all underlying assets of options in the portfolio and (2) computation of fair option values for all underlying asset price realizations at time $\bar{t}$.

For simplicity, in this paper,  we assume that the underlying of an option is a single asset but underlying asset prices across different options can be correlated. In order to generate accurate option values, it is well known that a constant volatility Black-Scholes (BS) model is inadequate~\cite{hull2012options}, as evidenced by the implied volatility smile \cite{Dup94,Der94,Dum98,CLV98}. The local volatility function model is one of the simplest generalizations of the BS model, which has been widely studied in the literature, e.g., \cite{CLV98,Andersen00,He06}, in addition to being frequently adopted in industry practice. More sophisticated models used in option pricing include generalizations of the diffusion BS model augmented with stochastic jumps \cite{Merton76}, stochastic volatility \cite{Hes93}, and more recently, rough volatility \cite{Gat18}. The fair value of the option is generally characterized by the unique solution to a partial differential equation, which can suffer the curse of dimensionality and becomes computationally infeasible on a classical computer when when there are more than $3$ assets. 

Recent advances in quantum computational processors have demonstrated significant quantum computational advantages in problems such as random quantum state sampling~\cite{boixo2018characterizing,arute2019quantum,zhong2020quantum} and density matrix/quantum channel property learning~\cite{huang2022quantum, wu2023quantum}. Given the significance of these outcomes, quantum computers are expected to be capable of accelerating finance models, such as option value modelling~\cite{miyamoto2021pricing, rebentrost2018quantum, gonzalez2021pricing, stamatopoulos2020option}, portfolio management~\cite{woerner2019quantum, kerenidis2020quantum, qu2024experimental,slate2021quantum} and forecasting anomaly detection recommendations~\cite{low2014quantum, borujeni2021quantum, moreira2016quantum}. In the context of option value modelling, efficient quantum algorithms have been proposed for solving European options and Asian options~\cite{miyamoto2021pricing, rebentrost2018quantum, gonzalez2021pricing}, which approximately transform the differential equation into a system of linear equations on discretized grid points. Although under various assumptions on the conditioning of the linear systems and data-access model, the quantum linear system can be solved with an exponential reduction in the dependence on the dimension~\cite{harrow2009quantum}, a sampling cost needs to be paid to read out the solution from the quantum state, making this technique lose its original quantum advantage. As a result, a rigorous quantum advantage is still unclear for pricing risk modelling problems.

In this paper, we investigate the power of quantum computation in computing the VaR and CVaR for option portfolios. Specifically, when the fair option value function $V(S,t)$ is determined by a PDE method for European options, we design an efficient quantum approach to estimate the VaR/CVaR value, where the involved fundamental steps are provided in Sec.~\ref{Sec:outline}. The proposed quantum algorithm requires a $\tilde{\mathcal{O}}(\max\{\bar{t},T-\bar{t}\}\epsilon_d^{-1/2}\epsilon^{-4})$-depth quantum circuit, where $\epsilon$ represents the additive error to VaR/CVaR estimation and $\epsilon_d$ represents the error induced by the discretization approach. Although the computation of VaR/CVaR value requires the joint distribution of all underlying assets and option value functions, the proposed quantum algorithm does not require expensive classical post-processing (large sampling cost to read out the solution) and thus maintains the potential quantum speed-up.  The main goal of this paper is to present a quantum implementation for 
a hybrid PDE-MC approach to compute VaR/CVaR of a large portfolio of options, e.g., we are interested in computing, e.g., 5\% VaR/CVaR at time $\bar{t}>0$.

\section{Theoretical Background}

\subsection{Black Scholes Model}
To illustrate option pricing on a single underlying asset, we assume a generalized Black Scholes local volatility function model for an underlying asset price evolution,
\begin{equation}\label{eq:LVFReal}
\frac{dS(t)}{S(t)} = \mu {\rm \bm d}t + \sigma(S(t),t) {\rm \bm d}Z_t ,
\end{equation}
where $Z_t$ represents a standard Brownian motion, $\mu$ a drift, and $\sigma(S(t),t)$ a local volatility function.
This stochastic differential equation (SDE)
does not have a closed-form solution in general,  and consequently there can be no analytic formula for the distribution function for $S(t)$ at a given $t>0$.  In addition, prices of the underlying assets of the options in the portfolio are typically correlated, with a given correlation matrix $\Sigma$ (or its Cholesky factorization) for the corresponding Brownian motions. Without loss of generality, in this paper,  we demonstrate here quantum VaR/CVaR calculation assuming the constant elasticity volatility model~\cite{CEV75}, $\sigma(S,t)=\alpha/\sqrt{S}$, where $\alpha$ is a positive constant.

\subsection{PDE Approach to Compute Fair Option Value}
Assume that $r>0$ is the risk-free interest rate. For an option with expiry $T$, the fair option value function $V(S,t)$ is the unique solution to the partial differential equation:
 \begin{equation} \label{BSpde}
 \left\{
 \begin{split}
 &\frac{\partial V}{\partial t}  + \frac{1}{2} \sigma(S,t)^2 S^2 \frac{\partial^2 V}{\partial S^2} +
 r S\frac{\partial V}{\partial S} - rV = 0, \quad 0 \leq S < + \infty, ~ 0 \leq t <T\\
  & V(S,T) ={\rm payoff}(S),
 \end{split}
 \right.
 \end{equation}
which is an expression of the no-arbitrage assumption and the existence of a hedged portfolio under continuous rebalancing.
The expected rate of return of the underlying $\mu$ does not appear in BS PDE. Hence the no-arbitrage value of the option does not depend on $\mu$. It can be shown that, for an European option with an expiry $T>t>0$, under reasonable conditions, the fair value function has the form,
\begin{align}
V(S,t) = e^{-r(T-t)} \mathbb{E}^{Q}_{S,t}({\rm payoff}(S_{T}))
\end{align}
where $\mathbb{E}^Q_{S,t}(\cdot)$ is the conditional expectation, conditional on time $t$ and underlying price $S$, i.e., assuming $S_t$ now satisfies
\begin{equation}\label{eq:LVFRN}
\frac{dS(t)}{S(t)} = r {\rm \bm d}t + \sigma(S(t),t) {\rm \bm d}Z_t^Q ,
\end{equation}
where $Z_t^Q$ is a standard Brownian motion, and $r$ is the risk free interest rate.
Under a general model \eqref{eq:LVFReal}, this conditional expectation may not have an explicit formula but can be estimated using Monte-Carlo (MC) simulations.
Note that the above BS PDE can be readily generalized to accommodate multi-assets.

The fair value of the American option is the solution to the following 
partial differential equation complementarity problem:
\begin{equation}\label{compl}
\left\{
\begin{array}{l}
\frac{\partial V}{\partial t} + \frac{1}{2}\sigma(S,t)^2 S^2
\frac{\partial^2 V}{\partial S^2}
+ r S \frac{\partial V}{\partial S} - r V \leq 0,\\
V(S,t)-\payoff(S) \geq 0 \\
(V(S,t)-\payoff(S))(
\frac{\partial V}{\partial t} + \frac{1}{2}\sigma(S,t)^2 S^2
\frac{\partial^2 V}{\partial S^2}
+ r S \frac{\partial V}{\partial S} - r V) = 0 \\
0\leq S < +\infty,~~~ 0< t <T
\\
V(S,T) = \payoff(S)
\end{array}
\right.
\end{equation}
 
There are two main approaches for computing fair option values, either PDE or MC, with the PDE offering more accurate values 
 in general.
It is well known that the PDE approach can suffer the curse of dimensionality and becomes computationally infeasible on a classical computer as the dimensionality of the pricing problem becomes greater than $3$.
Monte Carlo simulation is an alternative approach for computing fair option value, where the underlying price is under the risk-neutral dynamics shown in Eq.~\eqref{eq:LVFRN}. However, using the Monte Carlo pricing method, computing VaR of a portfolio of options requires nested Monte Carlo simulations, which is also computationally expensive on classical computers. In addition, accurately computing the option value using MC faces more challenges for American options in comparison to European options.

\section{Outline of the Quantum Approach}
\label{Sec:outline}

While an option can have multiple assets as the underlying and typically simulations of joint distribution of all underlying prices are required, to highlight quantum implementation of option pricing, VaR calculation, and Monte Carlo simulation, here we consider the underlying of each option to be a single asset, which is shared by all options in the portfolio.

For each option $V(S,t)$ in the portfolio, option values on a finite grid can be computed by a finite difference PDE method. Recall that, for simplicity, here we have assumed that each option is written on a single asset. Assume a uniform discretization with timestep $\Delta\tau = T/\mathcal{N}$ along the time axis $t$ and a non-uniform discretization $\{S_0,S_1,\ldots, S_{2^n-1}\}$ along the  asset $S$ axis is provided. Let $V^t(S_j)=V(S_j,t)$ denote option values on the grid, where $j\in \{0,1, \ldots, 2^n-1\}$
and $t\in \mathcal{T}_{\rm set}=\{0,\Delta\tau, 2\Delta\tau,\cdots, \mathcal{N}\Delta\tau\}$.
Given a target time horizon $\bar{t}\in \mathcal{T}_{\rm set}$, 
the proposed computation follows the following steps:
 
 \begin{itemize}
 \item 
Step~1 (to be implemented in Section~\ref{Sec:IV}): Option values $\vec{V}^{\bar{t}}=\{V^{\bar{t}}(S_j)\}_{j=0}^{2^n-1}$ are computed on the discretization grid described above by a PDE approach. This step generates the quantum state 
\begin{align}
    |V^{\bar t}\rangle=\sum_{j=0}^{2^n-1}\frac{V^{\bar t}(S_j)}{\sqrt{\sum_{j}(V^{\bar t}(S_j))^2}}|j\rangle,
    \label{Eq:Vt}
\end{align}
where $V^{\bar t}(S_j)$ represents the fair option value on the point $(\Bar t, S_j)$.

\item 
Step~2 (to be implemented in Section~\ref{Sec:V}): Stock price $S_{\rm sub}^{\bar{t}}=(S^{\Bar t}_1, S^{\Bar t}_2,\cdots, S^{\Bar t}_L)\in\mathbb{R}^L$ at time 
${\bar t}\in\mathcal{T}_{\rm set}$ are computed by Monte Carlo simulations in quantum parallel. Each Monte Carlo trajectory is shaped by a distinct sequence of pseudo-random numbers. This step prepares the quantum state $|\phi^{\bar t}\rangle=\frac{1}{\sqrt{L}}\sum_{k=1}^{L}|k\rangle|\tilde{S}^{\bar t}_k\rangle$, where the register $|\tilde{S}^{\Bar t}_k\rangle$ contains $m$ ancillary qubits such that $|\tilde{S}^{\Bar t}_k-S^{\Bar t}_k|\leq2^{-m}$ with $m=\log(1/\epsilon)$.

\item
Step~3 (to be implemented in Section~\ref{Sec:VI}): A finite sample distribution of option values is obtained by the following mapping
\begin{align*}
    U:\frac{1}{\sqrt{L}}\sum\limits_{k=1}^L|k\rangle_0|\tilde{S}^{\bar t}_k\rangle_1|0\rangle^m_2\mapsto |\Phi\rangle =\frac{1}{\sqrt{L}}\sum\limits_{k=1}^L|k\rangle_0|\tilde{S}^{\bar t}_k\rangle_1|V(\tilde{S}^{\bar t}_k)\rangle_2,
    \label{Eq:target}
\end{align*}
where $\tilde{S}^{\bar t}_k=S_{j^{\prime}}\in S$ for some index $j^{\prime}\in[2^n]$, and the corresponding option value $V(\tilde{S}^{\bar t}_k)=V^{\bar{t}}(S_{j^{\prime}})$.
\item 
Step~4 (to be implemented in Section~\ref{Sec:VII}): A bisection search on option values provides the VaR of the portfolio at time ${\bar t}$. Based on the estimated VaR, a quantum amplitude estimation algorithm can be used to obtain the CVaR.
\end{itemize}
We provide the technical details of the above four steps in the following sections.


\section{A Quantum Algorithm for European Option Values}
\label{Sec:IV}
\subsection{Discrete PDE for European Options}

For each option $V^{t}(S_j)$ in the portfolio, option values on a finite discretization grid can be computed by a finite different PDE method. Recall that, for simplicity, here we have assumed that each option is written on a single asset.
For simplicity, assume that volatility function does not depend on time, for example set $\sigma(S,t) = \alpha/\sqrt{S}$, and 
let 
\begin{eqnarray*}
 && \alpha_{j} = \left[ \frac{\sigma_j^2 S_j^2}
    {(S_j-S_{j-1})(S_{j+1}-S_{j-1}) }
       - \frac{r S_j}{S_{j} - S_{j-1}}  \right], \nonumber \\
 && \beta_{j} =
    \frac{\sigma_j^2 S_j^2} {(S_{j+1}-S_j)(S_{j+1} - S_{j-1})},
\end{eqnarray*}
where $r$ is the risk-free interest rate.
Let the vectors 
\begin{equation}
    V^{t} = \left[ \begin{array}{c}
        V^{t}(S_0) \\
        V^{t}(S_1) \\
         | \\
       V^{t}(S_{2^n-1})
       \end{array}
     \right],
\quad \quad
   V^{t-\Delta\tau} = \left[ \begin{array}{c}
        V^{t-\Delta\tau}(S_0) \\
        V^{t-\Delta\tau}(S_1) \\
         | \\
       V^{t-\Delta\tau}(S_{2^n-1})
       \end{array}
     \right],
\end{equation}
and let $M$ be the tridiagonal matrix with
entries
\begin{align}
     [M V^{t-\Delta\tau}]_j=-\Delta \tau \alpha_j(V^{t-\Delta\tau}(S_{j-1})-V^{t-\Delta\tau}(S_{j}))
      -\Delta\tau\beta_j(V^{t-\Delta\tau}(S_{j+1})-V^{t-\Delta\tau}(S_{j}))
       +r\Delta\tau V^{t-\Delta\tau}(S_{j})
       \label{Eq:Melement}
\end{align}
when $j\neq 0, 2^n-1$. Let the first row and last row of $M$ correspond to boundary conditions.
We can write the fully implicit time stepping as
\begin{align}
  V^{t-\Delta\tau} =  [I_n + M]^{-1}V^{t}.
   \label{Eq:updaterule}
\end{align}
Updating $V^{t}$ to $V^{t-\Delta\tau}$ requires a high-dimensional matrix transformation especially for multi-asset option pricing, posing significant challenges for classical computers.

\subsection{A Quantum algorithm for European Option PDE}
Here, we demonstrate how to utilize a quantum computer to prepare the quantum state $|V^{\Bar{t}}\rangle$ given in Eq.~\ref{Eq:Vt} where  $\bar{t}$ represents the target time. 
According to the update rule (Eq.~\ref{Eq:updaterule}), the vector at $\Bar{t}$ can be represented by
\begin{align}
V^{\Bar{t}}=\left[I+M\right]^{-1}V^{\bar{t}+\Delta\tau}=\cdots=
\left[I+M\right]^{-(T-\bar{t})(\Delta\tau)^{-1}}V^{T},
\end{align}
where the starting point $V^{T}=[{\rm payoff}(S_0),\cdots, {\rm payoff}(S_{2^n-1})]$. Note that the matrix $M$ is a $3$-sparse tridiagonal matrix, whose entries are given in Eq.~\ref{Eq:Melement}, and is thus generally non-Hermitian. Then we need to consider the singular value decomposition of $\tilde{M}=I+M=\sum_{k=1}^J\lambda_k|w_k\rangle\langle v_k|$, where $J\leq\mathcal{O}(2^n)$ represents the rank of $\tilde{M}$, $(\lambda_1,\cdots,\lambda_J)$ represent singular values of $\tilde{M}$, and $\{|w_k\rangle\},\{|v_k\rangle\}_{k=1}^J$ are singular vectors of $\tilde{M}$. As a result, the vector $V^{\Bar{t}}$ can be further expressed in terms of the singular vectors $|w_k\rangle$ and $|v_k\rangle$, that is
\begin{align}
    |V^{\Bar{t}}\rangle=\tilde{M}^{-(T-\bar{t})\Delta\tau^{-1}}|V^T\rangle=\sum\limits_{k=1}^J\frac{\langle v_k|V^T\rangle}{\lambda_k^{(T-\bar{t})\Delta\tau^{-1}}}|w_k\rangle.
\end{align}
Since the function ${\rm payoff}(\cdot)$ is classically efficiently computable, the initial quantum state $$|V^T\rangle=\sum\limits_{i=0}^{2^n-1}\frac{{\rm payoff}(S_i)|i\rangle}{\sqrt{\sum_{i=0}^{2^n-1}{\rm payoff}(S_i)^2}}$$ can be efficiently prepared by the Grover-Randolph algorithm~\cite{grover2002creating}. 
In the following, we demonstrate how to utilize the Quantum Singular Value Transformation~(QSVT)~\cite{gilyen2019quantum} to simulate the process $\tilde{M}^{-(T-\bar{t})\Delta\tau^{-1}}|V^T\rangle$ on a quantum computer and finally yield the quantum state $|V^{\bar{t}}\rangle$.

Let $g(x)=\frac{1}{2}(x/\|\tilde{M}\|_2)^{-(T-\bar{t})\Delta\tau^{-1}}$, where the norm $\|\tilde{M}\|_2=\max\{\abs{\lambda_k}\}_{k=1}^J$, then the QSVT method provides a quantum circuit implementation to achieve $\frac{g(\tilde{M})|V^T\rangle}{\|g(\tilde{M})|V^T\rangle\|}$, where the matrix function $g(\tilde{M})=\sum_{r=1}^Jg(\lambda_r)|w_r\rangle\langle v_r|$. Here, the fundamental idea within the QSVT method relies on two steps:
\begin{itemize}
\item Encoding the non-unitary matrix $\tilde{M}$ into a higher-dimensional unitary matrix $U_{\tilde{M}}$~(Encoding Phase);
\item Constructing a $d$-degree polynomial approximations $P(x)$ to $g(x)$ within $\epsilon$-additive error. Furthermore, finding phase factors $\Phi=(\phi_1,\cdots,\phi_d)\in\mathbb{R}^d$ to modulate $U_{\tilde{M}}$ to achieve $P(\tilde{M})$ which approximates $g(\tilde{M})$~(Modulation Phase).
\end{itemize}

\subsubsection{Encoding Phase}
We introduce the block-encoding technique, where the fundamental idea is to represent a sub-normalized matrix as the upper-left block of a unitary matrix
\begin{equation}
 U_{\tilde{M}}=
\begin{bmatrix} 
\tilde{M}/\gamma & \cdots\\
\cdots & \cdots
\end{bmatrix}
\end{equation}
which is equivalent to $\tilde{M}=\gamma\left(\langle0|\otimes I\right)U_{\tilde{M}}\left(|0\rangle\otimes I\right)$.

\begin{definition}[Block-Encoding]
    Suppose that the matrix $\tilde{M}$ is an $n$-qubit operator, $\gamma, \epsilon\in\mathbb{R}_{+}$ and $a\in\mathbb{N}$, then we say that the $(n+a)$-qubit unitary $U_{\tilde{M}}$ is an $(\gamma,a,\epsilon)$-block-encoding of $\tilde{M}$, if
    \begin{align}
        \|\tilde{M}-\gamma\left(\langle0|^{\otimes a}\otimes I_n\right)U_{\tilde{M}}\left(|0\rangle^{\otimes a}\otimes I_n\right)\|\leq \epsilon.
    \end{align}
\end{definition}
In our case, $M$ is a tridiagonal matrix, and $\tilde{M}=M+I_n\in\mathbb{R}^{2^n\times 2^n}$ is thus a $3$-sparse matrix. Camps et al.~\cite{camps2022explicit} provided an efficient method for constructing a Block-Encoding of sparsity matrices. Consider $c(j,l)$ to be a function that gives the row index of the $l$-th non-zero matrix elements in the $j$-th column of $\tilde{M}\in\mathbb{R}^{2^n\times 2^n}$. Suppose there exists a unitary $U_c$ such that $U_c|l\rangle|j\rangle=|l\rangle|c(j,l)\rangle$, and a unitary $U_R$ such that
\begin{align}
U_R|0\rangle|l\rangle|j\rangle=\left(\tilde{M}_{c(j,l),j}|0\rangle+\sqrt{1-|\tilde{M}_{c(j,l),j}|^2}|1\rangle\right),
\end{align}
then $U_{\tilde{M}}=\left(I_2\otimes H^{\otimes 2}\otimes I_2^{\otimes n}\right)(I_2\otimes U_c)U_R\left(I_2\otimes H^{\otimes 2}\otimes  I_2^{\otimes n}\right)$ represents a $(1,3,0)$-block-encoding of $\tilde{M}/3$. This statement can be verified easily. Starting from the quantum state $|0\rangle|0^{[\log s]}\rangle|j\rangle$, we have
\begin{eqnarray}
\begin{split}
    |0\rangle|0^{[\log s]}\rangle|j\rangle
     &\stackrel{\left(I_2\otimes H^{\otimes 2}\otimes  I_2^{\otimes n}\right)}{\longrightarrow} \frac{1}{\sqrt{s}}\sum\limits_{l\in[s]}|0\rangle|l\rangle|j\rangle
     \stackrel{U_R}{\longrightarrow}\frac{1}{\sqrt{s}}\sum\limits_{l\in[s]}\left(\tilde{M}_{c(j,l),j}|0\rangle+\sqrt{1-|\tilde{M}_{c(j,l),j}|^2}|1\rangle\right)|l\rangle|j\rangle\\
     &\stackrel{\left(I_2\otimes U_c\right)}{\longrightarrow} \frac{1}{\sqrt{s}}\sum\limits_{l\in[s]}\left(\tilde{M}_{c(j,l),j}|0\rangle+\sqrt{1-|\tilde{M}_{c(j,l),j}|^2}|1\rangle\right)|l\rangle|c(j,l)\rangle\\
     &\stackrel{\left(I_2\otimes H^{\otimes 2}\otimes  I_2^{\otimes n}\right)}{\longrightarrow} \frac{1}{\sqrt{s}}\sum\limits_{l\in[s]}\left(\tilde{M}_{c(j,l),j}|0\rangle+\sqrt{1-|\tilde{M}_{c(j,l),j}|^2}|1\rangle\right)H^{\otimes 2}|l\rangle|c(j,l)\rangle.
\end{split}
\end{eqnarray}
Finally, we utilize $\langle0|\langle0^{[\log s]}|\langle i|$ to post-process the above quantum state, hence the inner product $$\langle0|\langle0^{\log s}|\langle i|U_{\tilde{M}}|0\rangle|0^{\log s}\rangle|j\rangle=\frac{\tilde{M}_{ij}}{s}$$ which implies $U_{\tilde{M}}$ is a $(1,3,0)$-block-encoding of $\tilde{M}/s$.


\subsubsection{Modulation Phase}
A polynomial approximation of $g(x)=\frac{1}{2}(x/\|\tilde{M}\|_2)^{-(T-\bar{t})\Delta\tau^{-1}}$ is required.
\begin{lemma}
    Let $\epsilon\in(0,1/2]$, $T-\bar{t}>0$ and function $g(x)=\frac{1}{2}(x/\|\tilde{M}\|_2)^{-(T-\bar{t})\Delta\tau^{-1}}$, then there exist a real polynomial function $P(x)\in\mathbb{R}[x]$, such that $\|P(x)-g(x)\|_{[\|\tilde{M}\|_2^{-1},1]}\leq\epsilon$ and $\|P(x)\|_{[-1,1]}\leq 1$. The degree of $P$ is at most $\mathcal{O}\left((T-\bar{t})\Delta\tau^{-1}\|\tilde{M}\|_2\log(1/\epsilon)\right)$.
\label{lemma:polynomial_approximation}
\end{lemma}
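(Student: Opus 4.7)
The plan is to express the target function as a positively weighted superposition of decaying exponentials via the Gamma-integral identity
\begin{equation*}
x^{-p}\;=\;\frac{1}{\Gamma(p)}\int_{0}^{\infty}t^{p-1}e^{-tx}\,dt,
\end{equation*}
with $p\equiv(T-\bar t)\Delta\tau^{-1}$, and then approximate each exponential by a low-degree Chebyshev or truncated-Taylor polynomial. This is the standard route for producing QSVT-admissible polynomial approximants of negative-power matrix functions, and the scaling $O((p/\delta)\log(1/\epsilon))$ with $\delta=\|\tilde M\|_2^{-1}$ coincides with the degree stated in the lemma, so the claimed bound drops out once the bookkeeping is done.

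Concretely I would proceed in three steps. First, truncate the $t$-integration at $t_{\max}=O\!\left(p\|\tilde M\|_2+\|\tilde M\|_2\log(1/\epsilon)\right)$; since $x\ge\|\tilde M\|_2^{-1}$ on the approximation interval and the integrand $t^{p-1}e^{-tx}$ peaks at $t\sim(p-1)/x$, standard incomplete-Gamma tail bounds keep the truncation error below $\epsilon$. Second, discretize the remaining integral with a quadrature rule having $\mathrm{polylog}(1/\epsilon)$ nodes (for example Gauss--Legendre on $[0,t_{\max}]$), losing another $\epsilon$. Third, at each quadrature node $t_k\le t_{\max}$ replace $e^{-t_k x}$ by its truncated Taylor or Chebyshev approximation of degree $d_k=O(t_k+\log(1/\epsilon))$, using the classical degree bound for approximating an exponential on $[-1,1]$. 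Taking the weighted sum yields a single polynomial $P(x)$ of degree $O(t_{\max})=O(p\|\tilde M\|_2\log(1/\epsilon))$; a triangle inequality over the three error sources gives the claimed $\epsilon$.

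The main technical obstacle I expect is controlling error amplification by the Gamma weight: the prefactor $t^{p-1}/\Gamma(p)$ is exponentially large near the mode of the integrand, so naively combining an $L^{\infty}$ bound on $e^{-tx}-$(its Chebyshev truncation) with this weight destroys the overall tolerance. The standard cure is to pick the per-node degree $d_k$ adaptively in $t_k$, so that each term contributes at most $\epsilon/(\text{number of nodes})$ to the total error; equivalently, one may first normalize $g$ by its $L^{\infty}$ norm on $[\|\tilde M\|_2^{-1},1]$, approximate the resulting $O(1)$-size function, and then scale back. The $\tfrac12$ prefactor in the definition of $g$ supplies exactly the slack needed to rescale the resulting polynomial so that $\|P\|_{[-1,1]}\le 1$, the admissibility condition for QSVT. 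As a shortcut one may instead invoke an off-the-shelf polynomial approximation of $x^{-c}$ on $[\delta,1]$ from the QSVT/quantum-linear-systems literature with $c=p$ and $\delta=\|\tilde M\|_2^{-1}$, which packages the bookkeeping above into a single citation.
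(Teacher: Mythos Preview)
Your proposal is correct but follows a genuinely different route from the paper. The paper does \emph{not} build the approximant from the Gamma-integral representation; instead it expands $g$ about the point $x=1$ via the generalized binomial series $(1+y)^{-\tilde T}=\sum_k\binom{-\tilde T}{k}y^k$ (with $\tilde T=(T-\bar t)\Delta\tau^{-1}$), sets $\eta=1-\|\tilde M\|_2^{-1}$ and $\delta=\tilde T^{-1}\|\tilde M\|_2^{-1}$, bounds the weighted coefficient sum $\sum_l(\eta+\delta)^l|a_l|$ by a constant, and then invokes Corollary~66 of Gily\'en et al.\ as a black box to obtain a polynomial of degree $\mathcal O(\delta^{-1}\log(1/\epsilon))=\mathcal O(\tilde T\|\tilde M\|_2\log(1/\epsilon))$. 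In effect the paper takes exactly the ``shortcut'' you mention in your last sentence: it cites an off-the-shelf approximation lemma rather than constructing the polynomial by hand. Your integral--quadrature--Chebyshev construction is more explicit and, if the bookkeeping is carried out, actually yields a slightly sharper additive degree $\mathcal O((\tilde T+\log(1/\epsilon))\|\tilde M\|_2)$; the price is that the quadrature step and the $\|P\|_{[-1,1]}\le 1$ rescaling require real work (in particular, controlling the growth of the polynomial off the approximation interval is not automatic in your construction, whereas it is packaged into the Gily\'en et al.\ corollary). Either way the claimed degree bound follows.
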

The proof is based on Corollaries~66 in Ref~\cite{gilyen2019quantum}. 
\begin{lemma}[Ref~\cite{gilyen2019quantum}]
    Let $x_0\in[-1,1]$, $r\in(0,2], \delta\in(0,\eta]$ and let $f:[-x_0-\eta-\delta,x_0+\eta+\delta]\mapsto\mathbb{C}$ and  $f(x_0+x)=\sum_{l=0}^{\infty}a_lx^l$ for all $x\in[-\eta-\delta,\eta+\delta]$. Suppose that $\sum_{l=0}^{\infty}(\eta+\delta)^l|a_l|\leq B$. Let $\epsilon\in[0,1/2B]$, then there is an efficiently computable polynomial $P(x)\in\mathbb{C}[x]$ of degree $J\leq\mathcal{O}(\frac{1}{\delta}\log(B/\epsilon))$ such that
    $\|f(x)-P(x)\|_{[x_0-\eta,x_0+\eta]}\leq\epsilon$.
    \label{lemma:lemma2}
\end{lemma}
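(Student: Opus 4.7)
The plan is to approximate $f$ by a truncation of its given power series expansion about $x_0$. Specifically, I would take
\[ P_J(x) := \sum_{l=0}^{J} a_l\, (x-x_0)^l, \]
which is a polynomial of degree at most $J$ in $x$. Because the series is absolutely convergent on $[x_0-\eta-\delta,\, x_0+\eta+\delta]$, on the smaller interval $[x_0-\eta,\, x_0+\eta]$ the approximation error is exactly the tail of the series, so
\[ \|f - P_J\|_{[x_0-\eta,\, x_0+\eta]} \;\leq\; \sum_{l=J+1}^{\infty} |a_l|\, \eta^l. \]

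The key step is to use the quantitative hypothesis $\sum_{l\geq 0}|a_l|(\eta+\delta)^l \leq B$, which immediately yields the coefficient bound $|a_l| \leq B(\eta+\delta)^{-l}$. Inserting this bound into the tail and summing the resulting geometric series of ratio $\eta/(\eta+\delta) < 1$ produces
\[ \sum_{l=J+1}^{\infty} |a_l|\,\eta^l \;\leq\; B\cdot\frac{\eta+\delta}{\delta}\cdot\left(\frac{\eta}{\eta+\delta}\right)^{J+1}. \]
Forcing this tail below $\epsilon$ and using the standard inequality $\log(1+\delta/\eta) \geq \delta/(\eta+\delta)$ valid for $\delta\in(0,\eta]$, it is sufficient to take
\[ J+1 \;\geq\; \frac{\eta+\delta}{\delta}\,\log\!\left(\frac{B(\eta+\delta)}{\epsilon\,\delta}\right). \]
Under the ambient normalization that $\eta$ is bounded by an absolute constant (implicit in the QSVT setting, since $x_0 \in [-1,1]$ and the functional calculus lives on $[-1,1]$), the prefactor becomes $\mathcal{O}(\delta^{-1})$; the hypothesis $\epsilon \leq 1/(2B)$ in turn forces $\log(B/\epsilon) \gtrsim 1$ and absorbs the lower-order $\log(1/\delta)$ term inside the logarithm, collapsing the bound to the claimed $J = \mathcal{O}(\delta^{-1}\log(B/\epsilon))$. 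Efficient computability is automatic since the coefficients of $P_J$ are precisely the first $J+1$ coefficients of the series for $f$ that we are given.

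The main obstacle I anticipate is the gap between the naive Taylor estimate, which yields $J = \mathcal{O}((\eta/\delta)\log(B/\epsilon))$, and the sharper claimed rate $\mathcal{O}(\delta^{-1}\log(B/\epsilon))$; that gap is closed only because $\eta$ is implicitly bounded in the ambient setting. A cleaner route that avoids relying on this implicit normalization is to replace Taylor truncation with Chebyshev truncation on the target interval $[x_0-\eta, x_0+\eta]$, exploiting that $f$ extends analytically to a Bernstein ellipse whose semi-major axis is at least $\eta+\delta$, so its Chebyshev coefficients decay geometrically with the same ratio $\eta/(\eta+\delta)$; this yields the stated degree bound with explicit constants independent of $\eta$.
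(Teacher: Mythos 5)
The paper does not actually prove this lemma: it is quoted verbatim (with typos) from Corollary~66 of Ref.~\cite{gilyen2019quantum}, so the only comparison available is with that source. Your Taylor-truncation argument is the right elementary route and does establish the statement as reproduced here, but one step is too lossy to yield the advertised degree: bounding $|a_l|\le B(\eta+\delta)^{-l}$ termwise and summing the geometric series costs you the prefactor $(\eta+\delta)/\delta$, hence a $\log(1/\delta)$ inside the logarithm, and your attempt to absorb it via $\epsilon\le 1/(2B)$ does not work in general (take $B=1$, $\epsilon=1/2$, $\delta\to 0$: your bound exceeds the claimed one by a factor of order $\log(1/\delta)$). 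The fix is one line: keep the weighted tail intact, $\sum_{l>J}|a_l|\eta^l\le\bigl(\tfrac{\eta}{\eta+\delta}\bigr)^{J+1}\sum_{l>J}|a_l|(\eta+\delta)^l\le B\bigl(\tfrac{\eta}{\eta+\delta}\bigr)^{J+1}$, so $J+1\ge\tfrac{\eta+\delta}{\delta}\log(B/\epsilon)$ already suffices, with no spurious term. Also, the boundedness of $\eta$ that you invoke as an ``implicit normalization'' is really a hypothesis: the vestigial $r\in(0,2]$ in the statement is the radius bound of the original corollary ($r$ and $\eta$ denote the same quantity), and with $\delta\le\eta\le 2$ the prefactor is at most $4/\delta$.

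One further caveat worth recording: Corollary~66 of Ref.~\cite{gilyen2019quantum} asserts strictly more than what is restated here, namely that $P$ can additionally be kept bounded on all of $[-1,1]$ and small away from $[x_0-\eta,x_0+\eta]$; this is obtained there by multiplying the truncated Taylor series with a polynomial approximation of a rectangle function, and it is precisely the extra property the paper's Lemma~\ref{lemma:polynomial_approximation} invokes when it claims $\|P(x)\|_{[-1,1]}\le 1$. Plain Taylor truncation (and likewise your Chebyshev alternative, whose decay ratio from the disk of analyticity is in fact about $1-\sqrt{2\delta/\eta}$, better than the $\eta/(\eta+\delta)$ you state) does not provide that global control, so your argument proves the lemma as literally stated but not the stronger form the surrounding application implicitly relies on.
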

\begin{proof}[Proof of Lemma~\ref{lemma:polynomial_approximation}]
    Denote $\tilde{T}=(T-\tilde{t})\Delta\tau^{-1}$. For all $y\in(-1,1)$, the generalized binomial theorem yields $(1+y)^{-\tilde{T}}=\sum_{k=0}^{\infty}\tbinom{-\tilde{T}}{k}y^k$, where $\tbinom{-\tilde{T}}{k}=-\tilde{T}(-\tilde{T}-1)\cdots(-\tilde{T}-k+1)/k!$. Let $x_0=0, \eta=1-\|\tilde{M}\|_2^{-1}$ and $\delta=\tilde{T}^{-1}\|\tilde{M}\|_2^{-1}$. Furthermore, we have $a_l=\frac{\|\tilde{M}\|_2^{\tilde{T}}}{2}\tbinom{-\tilde{T}}{k}$ and
    \begin{align}
        \sum\limits_{l=0}^{\infty}(\eta+\delta)^l|a_l|=\frac{\|\tilde{M}\|_2^{\tilde{T}}}{2}\sum\limits_{l=0}^{\infty}(\eta+\delta)^l\abs{\tbinom{-\tilde{T}}{l}}=\frac{\|\tilde{M}\|_2^{\tilde{T}}}{2}\sum\limits_{l=0}^{\infty}\tbinom{-\tilde{T}}{l}(-\eta-\delta)^l=\frac{\|\tilde{M}\|_2^{\tilde{T}}}{2}(1-\eta-\delta)^{-\tilde{T}}\leq\frac{e}{2}.
    \end{align}
 As a result, the upper bound of $\sum_{l=0}^{\infty}(\eta+\delta)^l|a_l|$ is $B=e/2$, and the polynomial approximation $P(x)$ 
 has degree $\leq\mathcal{O}(\delta^{-1}\log(B/\epsilon))=\mathcal{O}\left(\tilde{T}\|\tilde{M}\|_2\log(1/\epsilon)\right)\mathcal{O}\left((T-\bar{t})\Delta\tau^{-1}\|\tilde{M}\|_2\log(1/\epsilon)\right)$.
\end{proof}


\begin{fact}[QSVT]
Suppose the matrix $\tilde{M}\in \mathbb{R}^{2^n\times 2^n}$ is encoded by a $(1,2,0)$-block-encoding unitary $U_{\tilde{M}}$. Given the polynomial function $P(x)$ described in Lemma~\ref{lemma:polynomial_approximation}, there exists a set of phase factors $\Phi=(\phi_1,\cdots,\phi_d)$ such that 
    \begin{equation}
        U_{\Phi}=\left\{
        \begin{aligned}
            e^{i\phi_1(2\Pi-I)}U_{\tilde{M}}\prod\limits_{j=1}^{(d-1)/2}\left(e^{i\phi_{2j}(2\Pi-I)}U_{\tilde{M}}^{\dagger}e^{i\phi_{2j+1}(2\Pi-I)}U_{\tilde{M}}\right), (\text{$d$ is odd})\\
            \prod\limits_{j=1}^{d/2}\left(e^{i\phi_{2j}(2\Pi-I)}U_{\tilde{M}}^{\dagger}e^{i\phi_{2j+1}(2\Pi-I)}U_{\tilde{M}}\right), (\text{$d$ is even})
        \end{aligned}
        \right.
        \label{Eq:QSVT}
    \end{equation}
and $U_{\Phi}$ is a $(1,4,0)$-block-encoding unitary of $P(\tilde{M})$, where $d=\mathcal{O}\left((T-\bar{t})\Delta\tau^{-1}\|\tilde{M}\|_2\log(1/\epsilon)\right)$ and the projector $\Pi=\left(|0\rangle\langle0|\right)^{\otimes 3}$.
\end{fact}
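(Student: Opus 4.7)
The plan is to derive this Fact directly from the quantum singular value transformation framework of Ref~\cite{gilyen2019quantum}. I would begin with the singular value decomposition $\tilde{M} = \sum_{k=1}^{J} \lambda_k |w_k\rangle\langle v_k|$ and, for each $k$, identify the two-dimensional invariant subspace spanned by the ancilla-flagged state $|0\rangle^{\otimes a}|v_k\rangle$ and its image $U_{\tilde{M}}|0\rangle^{\otimes a}|v_k\rangle$. The defining property of a block-encoding lets me write this image as $(\lambda_k/\|\tilde{M}\|_2)|0\rangle^{\otimes a}|w_k\rangle + \sqrt{1 - (\lambda_k/\|\tilde{M}\|_2)^2}\,|\perp_k\rangle$, where $|\perp_k\rangle$ has no overlap with the $|0\rangle$-ancilla subspace. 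Together with the analogous relation for $U_{\tilde{M}}^\dagger$, this yields explicit $2\times 2$ matrix representations of $U_{\tilde{M}}$, $U_{\tilde{M}}^\dagger$, and the reflection $2\Pi - I$ on a pair of two-dimensional subspaces associated with $|v_k\rangle$ and $|w_k\rangle$. This is the qubitization step that decouples the QSVT circuit across singular triples.

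Next, I would show that, restricted to each of these $2\times 2$ blocks, the alternating product in Eq.~\eqref{Eq:QSVT} reproduces the quantum signal processing (QSP) recursion: the factors $e^{i\phi_j(2\Pi - I)}$ act as signal-independent $Z$-rotations, while the interleaving of $U_{\tilde{M}}$ and $U_{\tilde{M}}^\dagger$ acts as the signal-dependent rotation whose angle encodes $\lambda_k/\|\tilde{M}\|_2$. An induction on the number of factors, identical to the one in Theorem~17 of Ref~\cite{gilyen2019quantum}, then identifies the top-left block of $U_\Phi$ with $\sum_k P_\Phi(\lambda_k)|w_k\rangle\langle v_k| = P_\Phi(\tilde{M})$, where $P_\Phi$ is a real polynomial of degree $d$ and definite parity determined by $\Phi$. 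The QSP existence theorem guarantees that every real polynomial of definite parity bounded by $1$ on $[-1,1]$ arises from some choice of $\Phi$. Since the target $P(x)$ produced by Lemma~\ref{lemma:polynomial_approximation} is not of definite parity, I would split $P = P_e + P_o$, implement each part by its own QSVT circuit, and combine them via a linear-combination-of-block-encodings step; the two extra ancilla qubits introduced by this combination are what upgrade the block-encoding dimension from the input $(1,2,0)$ to the output $(1,4,0)$.

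The degree bound $d = \mathcal{O}\!\left((T-\bar{t})\Delta\tau^{-1}\|\tilde{M}\|_2\log(1/\epsilon)\right)$ is then inherited verbatim from Lemma~\ref{lemma:polynomial_approximation}, since QSP realizes the polynomial exactly and the number of calls to $U_{\tilde{M}}, U_{\tilde{M}}^\dagger$ matches the polynomial degree up to a constant. The main obstacle in making this rigorous is the qubitization step for the \emph{non-Hermitian} $\tilde{M}$: one must work with singular vectors rather than eigenvectors and carefully check that $U_{\tilde{M}}$ and $U_{\tilde{M}}^\dagger$ intertwine the $|v_k\rangle$ and $|w_k\rangle$ subspaces with the correct orientation so that the $2\times 2$ block structure is preserved. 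A secondary bookkeeping obstacle is handling the parity splitting cleanly so that the two implementations share compatible block-encoding conventions before being linearly combined. Both steps are standard reductions to the machinery of Ref~\cite{gilyen2019quantum}, so the final proof is essentially a verification that the hypotheses of its Theorem~17 and Corollary~18 hold for our specific $U_{\tilde{M}}$, $\Pi$, and $P(x)$.
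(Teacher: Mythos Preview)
Your proposal is a faithful outline of the standard QSVT argument from Ref~\cite{gilyen2019quantum}: qubitization of the block-encoding into singular-value-labelled $2\times 2$ blocks, reduction to the QSP recursion on each block, and an LCU-style parity split to handle a general real polynomial. There is nothing wrong with it as a proof sketch, and the main technical caveats you flag (non-Hermitian $\tilde{M}$ requiring singular vectors, and the ancilla bookkeeping in the parity combination) are the right ones.

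The point of comparison, however, is that the paper does \emph{not} prove this statement at all. It is labelled a \textbf{Fact}, not a Lemma or Theorem, and is simply invoked as a black box from Ref~\cite{gilyen2019quantum}; the only follow-up remark is that the phase factors can be computed via QSPPACK. So you are supplying a full proof sketch where the paper is content to cite the result. Your writeup is therefore strictly more detailed than what the paper contains, and if you wish to match the paper's treatment you can replace your argument with a one-line citation of the relevant theorems in Ref~\cite{gilyen2019quantum}.
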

Noting that the phase factors $(\phi_1,\cdots,\phi_d)$ can be efficiently calculated by using the QSPPACK. The above result directly yields 
\begin{align}
    \|\left(\langle10^{\otimes 3}|\otimes I_n\right)U_{\Phi} \left(|10^{\otimes 3}\rangle\otimes I_n\right)|V^T\rangle-|V^{\bar{t}}\rangle\|_2\leq\epsilon,
\end{align}
where the query complexity of $U_{\Phi}$ is at most $\mathcal{O}\left((T-\bar{t})\Delta\tau^{-1}\|\tilde{M}\|_2\log(1/\epsilon)\right)$.

\begin{figure*}[h]
\centering
\includegraphics[width=0.95\textwidth]{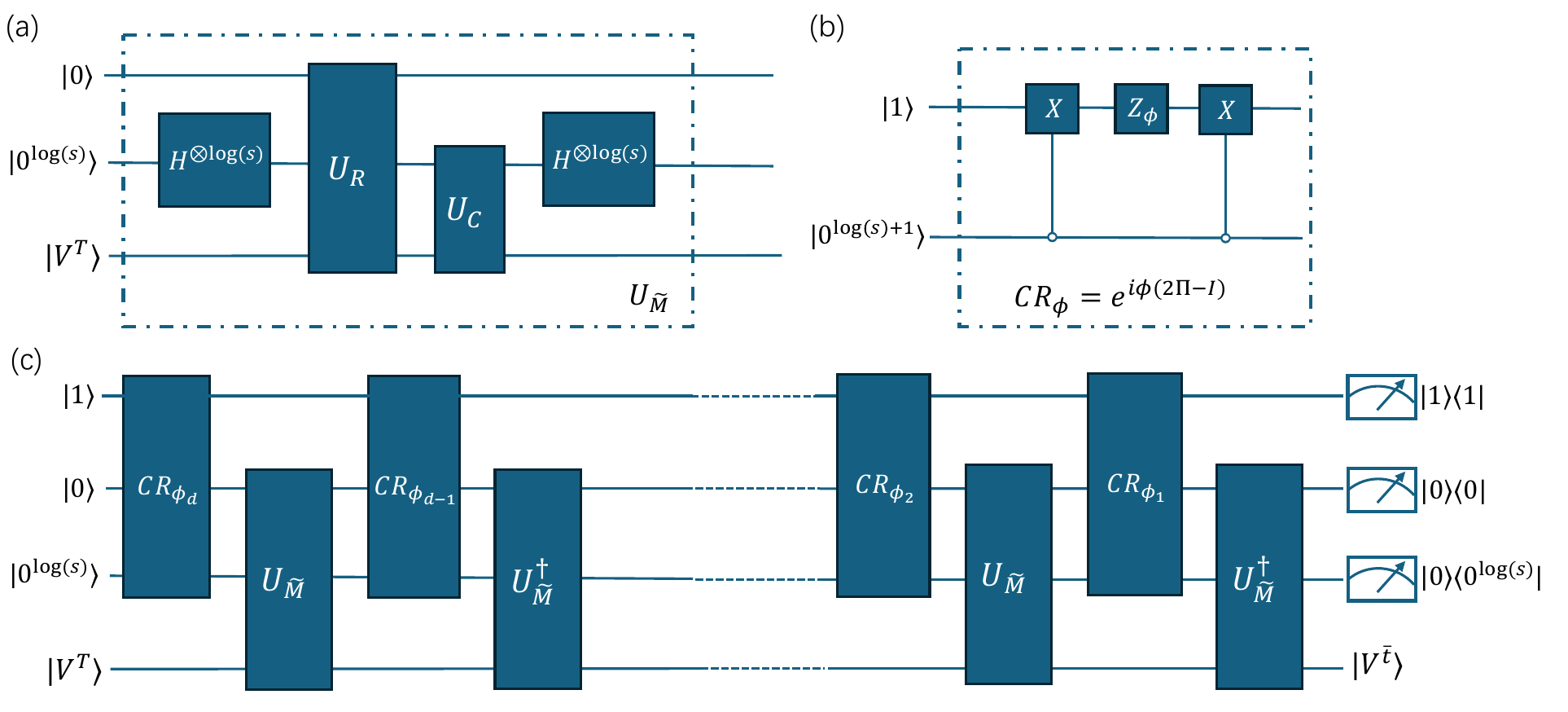} 
\caption{\textbf{Quantum circuit in implementing Step~1.} (a) A quantum circuit to achieve the $(1,3,0)$-block-encoding of the matrix $\tilde{M}$. (b) The quantum circuit to achieve the controlled rotation ${\rm CR}_{\phi}$. (c) A quantum circuit of depth $d=\mathcal{O}\left((T-\bar{t})\Delta\tau^{-1}\|\tilde{M}\|_2\log(1/\epsilon)\right)$ to achieve the $(1,4,0)$-block-encoding of the polynomial transformation $g(\tilde{M})$, as shown in Eq.~\ref{Eq:QSVT}. Suppose $g(\cdot)$ can be approximated by a $d$-degree polynomial function within $\epsilon$ additive error, then $U_{\Phi}$ is composed of $d$ controlled rotations ${\rm CR}_{\phi}$ and a $d$ block-encoding of $\tilde{M}$.}
\label{fig:1}
\end{figure*}

\section{A Quantum Circuit for Monte Carlo Stock Price Simulation in Quantum Parallel}
\label{Sec:V}
Underlying price dynamics can be modelled according to a generalized Black-Scholes framework given by Eq.(\ref{eq:LVFReal}), which includes stochastic volatility using a local volatility function. This enhanced model demonstrates improved capability in matching traded option market prices, i.e. generating a volatility smile, which is widely recognized and documented in the equity options markets \cite{Dup94,Der94,Dum98,CLV98}. The local volatility function allows volatility to vary with time and future random stock price realization, potentially leading to a more accurate depiction of price uncertainty. 
This enhanced model provides a sophisticated tool for analyzing and predicting stock price behaviour, recognizing the dynamic and variable nature of volatility in financial markets. Applying Euler's method to \eqref{eq:LVFReal},  we have
\begin{align}
\label{stochevolve}
    S^{t+\Delta\tau}=S^{t}\left(1 + \mu \Delta\tau + \frac{\alpha}{\sqrt{S^{t}}} \Delta Z_t\right),
\end{align}
where $\Delta Z_t$ is the change in a standard Brownian motion, $\mu$ is a drift, $\sigma(S,t)=\frac{\alpha}{\sqrt{S}}$ is a local volatility function and the time variable $t\in[0,\bar{t}]$ which is different to the backward propagation shown in Sec.~\ref{Sec:IV}. Let $S^0_{\rm sub}=(S^0_1,\cdots,S^0_L)$,
$S^{t+\Delta\tau}_j=S^{t}_j\left(1+\mu\Delta\tau+\alpha(\Delta Z)^t_j(S^{t}_j)^{-1/2}\right)$, $\tilde{S}_j^t$ represent the estimation of $S_j^t$ within $2^{-m}$ additive error and the number of ancillary qubits $m=\mathcal{O}(\log(1/\epsilon)$. In the following, we demonstrate how to utilize a quantum computer to prepare $|\phi^{\bar t}\rangle=L^{-1/2}\sum_{k=1}^{L}|k\rangle|\tilde{S}^{\bar t}_k\rangle$ in parallel.

The quantum algorithm starts from the initial state
\begin{align}
    |S^0_{\rm sub}\rangle=\left(\frac{1}{\sqrt{L}}\sum\limits_{j=1}^L|j\rangle_1\right)\otimes |\tilde{S}^0\rangle_2
\end{align}
where the first register contains $\log(L)$ qubits while the second register has $m$ qubits. Then $|S^0_{\rm sub}\rangle$
can be efficiently prepared by $[\log(L)]$-Hadamard gates and $\mathcal{O}(m)$ Pauli-X gates, since the composite system $1$ and $2$ are in a tensor product state. To simulate the random variable $\Delta Z_j$ in the quantum circuit, we assign $\Delta Z_j=4j/L(1-j/L)$ which naturally simulates the logistic chaos variable. Let the function $\mathcal{F}(j,x)=(1+\mu\Delta\tau)x+\alpha\left(\Delta Z_j\right)\sqrt{x}$ and define the quantum gate 
\begin{align}
    {\rm CC}\text{-}{\rm CU}_{S}: |j\rangle|x\rangle|0\rangle^m\mapsto |j\rangle|x\rangle|\mathcal{F}(j,x)\rangle.
\end{align}
This enables us to achieve 
\begin{align}
   \frac{1}{\sqrt{L}}\sum\limits_{j=1}^L|j\rangle|\tilde{S}_j^{t-\Delta \tau}\rangle|0\rangle^m\stackrel{{\rm CC}\text{-}{\rm CU}_{S}}{\longrightarrow} \frac{1}{\sqrt{L}}\sum\limits_{j=1}^L|j\rangle|\tilde{S}_j^{t-\Delta \tau}\rangle|\tilde{S}_j^t\rangle.
\end{align}
Note that the inverse function of $\mathcal{F}(\cdot)$ exists, that is 
$\mathcal{F}^{-1}(j,x)=\left(\sqrt{(a^{-1}(x+b^2/4a))}-b/2a\right)^2$, where $a=(1+\mu\Delta\tau)$ and $b=\alpha\Delta Z_j$. This thus enables  the map 
\begin{align}
    {\rm CC}\text{-}{\rm CU}^{-1}_{S}: |j\rangle|\tilde{S}_j^{t-\Delta \tau}\rangle|\tilde{S}_j^{t}\rangle\mapsto |j\rangle|\tilde{S}_j^{t-\Delta \tau}\oplus \mathcal{F}^{-1}(j,\tilde{S}_j^{t}) \rangle|\tilde{S}_j^{t}\rangle.
\end{align}
Using quantum control gates ${\rm CC}$-${\rm CU}_{S^t}$ and ${\rm CC}$-${\rm CU}^{-1}_{S^t}$ iteratively, the initial state  $|\vec{S}^0\rangle$ may evolves to 
\begin{align}
    |\phi^{\bar{t}}\rangle=\frac{1}{\sqrt{L}}\sum\limits_{j=1}^L|j\rangle|\tilde{S}_j^{\bar{t}}\rangle
\end{align}
after $\bar{t}\Delta\tau^{-1}$ steps. The elaborate quantum circuit is shown in Fig.~\ref{fig:2}.

\begin{figure*}[h]
\centering
\includegraphics[width=0.80\textwidth]{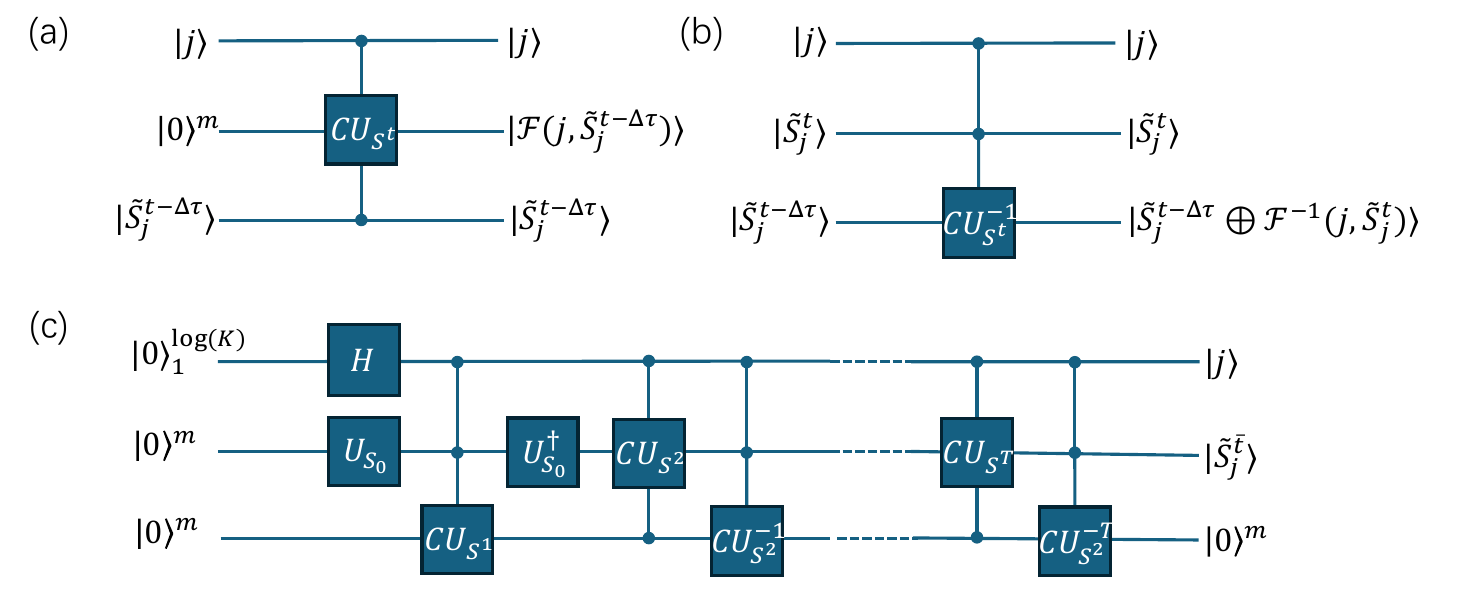} 
\caption{
\textbf{Quantum Circuit for Sub-Algorithm~2} 
(a) ${\rm CC}$-${\rm CU}_{S^t}$ maps $|j\rangle|0\rangle^m|\tilde{S}_j^{t-\Delta \tau}\rangle\mapsto |j\rangle|\mathcal{F}(j,\tilde{S}_j^{t-\Delta \tau})\rangle|\tilde{S}_j^{t-\Delta \tau}\rangle$, where the function $\mathcal{F}(j,\tilde{S}_j^{t-\Delta \tau})=(1+\mu\Delta\tau)\tilde{S}_j^{t-\Delta \tau}+\alpha\Delta Z_j\sqrt{\tilde{S}_j^{t-\Delta \tau}}$. (b) When $1+\mu\Delta\tau>0$, the function 
$\mathcal{F}^{-1}(j,\tilde{S}_j^{t})$
exists and equals $\mathcal{F}^{-1}(j,\tilde{S}_j^{t})=\tilde{S}_j^{t-\Delta \tau} = \left[\sqrt{(a^{-1}(\tilde{S}_j^{t}+b^2/4a))}-b/2a\right]^2$, 
where $a=(1+\mu\Delta\tau)$ and $b=\alpha\Delta Z_j$. The controlled-unitary ${\rm CC}$-${\rm CU}^{-1}_{S^t}$ 
then maps $|j\rangle|\tilde{S}_j^{t}\rangle|\tilde{S}_j^{t-\Delta \tau}\rangle\mapsto |j\rangle|\tilde{S}_j^{t}\rangle|\tilde{S}_j^{t-\Delta \tau}\oplus \mathcal{F}^{-1}(j,\tilde{S}_j^{t}) \rangle$. (c) This circuit maps $|S_{\rm sub}^0\rangle\mapsto |S_{\rm sub}^{\bar{t}}\rangle$, where $U_{S_0}|0^m\rangle=|S^0\rangle$. }
\label{fig:2}
\end{figure*}

\section{A Quantum Algorithm for Distribution of Value of Portfolio}
\label{Sec:VI}
Given several copies of the input quantum state $|V^{\Bar{t}}\rangle=\sum_{i\in\{0,1\}^n}\tilde{V}^{\Bar{t}}(S_i)|i\rangle$, we demonstrate how to achieve the map
\begin{align}
   U:|\phi^{\bar{t}}\rangle=\frac{1}{\sqrt{L}}\sum\limits_{k=1}^L|k\rangle_0|\bm\tilde{S}_k^{\bar{t}}\rangle_1|0\rangle^m_2\mapsto \frac{1}{\sqrt{L}}\sum\limits_{k=1}^L|k\rangle_0|\bm\tilde{S}_k^{\bar{t}}\rangle_1|\tilde{V}^{\Bar{t}}(\bm\tilde{S}_k)\rangle_2,
    \label{Eq:target}
\end{align}
where $\{\bm\tilde{S}^{\bar{t}}_1,\cdots,\bm\tilde{S}^{\bar{t}}_L\}$ represents the stock price at time $\bar{t}$.

The fundamental idea is based on Quantum Principle Component Analysis~(QPCA)~\cite{lloyd2014quantum,yu2019quantum} and Quantum Phase Estimation~(QPE)~\cite{nielsen2002quantum}. Suppose we have prepared the quantum state $|V^{\Bar{t}}\rangle$, then added $m$ ancillary qubits to store the grid information $S_j$ in the ancillary register
\begin{align}
    |V^{\Bar{t}}\rangle|0\rangle^m\mapsto |\psi_2\rangle=\frac{1}{\sqrt{\sum_{j}(V^{\Bar{t}}(\tilde{S}_j))^2}}\sum_{j\in\{0,1\}^n}V^{\Bar{t}}(S_j)|j\rangle_1|\tilde{S}_j\rangle_2.
    \label{Eq:LoadGridInformation}
\end{align}
Discarding subsystem $1$, the quantum system naturally becomes the density matrix
\begin{align}
     \rho={\rm Tr}_{1}\left[|\psi_2\rangle\langle\psi_2|\right]=\frac{\sum_{i\in\{0,1\}^n}(V^{\Bar{t}}(S_i))^2|\tilde{S}_i\rangle_1\langle \tilde{S}_i|_1}{\sum_{i\in\{0,1\}^n}(V^{\Bar{t}}(S_i))^2}.
     \label{Eq:density_matrix}
\end{align}
We can now utilize the QPCA method~\cite{lloyd2014quantum} to simulate $e^{-i\rho \tau}$ and extract the spectrum information of $\rho$ by using the QPE algorithm~\cite{nielsen2002quantum}. To achieve the estimation $|\tilde{S}_j\rangle_1|0\rangle^m\mapsto|\tilde{S}_j\rangle_1|\hat{V}^{\Bar{t}}(\tilde{ S}_j)\rangle$, where the estimated value $\hat{V}^{\Bar{t}}(\tilde{S}_j)$ satisfies $$\abs{\hat{V}^{\Bar{t}}(\tilde{S}_j)-\frac{V^{\Bar{t}}(\tilde{S}_j)}{\sqrt{\sum_{j}(V^{\Bar{t}}(S_j))^2}}}\leq 2^{-m},$$ we need the time parameter $\tau=\mathcal{O}(2^{m})$.

Starting from the quantum state $\left(\mathbb{I}_0\otimes\mathbb{I}_1\otimes H^{\otimes m}\right)|\phi^{\bar{t}}\rangle=\frac{1}{\sqrt{L}}\sum_{k=1}^L|k\rangle_0|\bm\tilde{S}_k^{\bar{t}}\rangle_1\otimes H^m|0\rangle^m_2$, the QPCA algorithm can help us estimate the spectrum information of $\rho$ from the ancillary qubits. For any $n$-qubit density matrix $\sigma$, we have ${\rm Tr}_{A}[e^{-i\omega\Delta t}\rho_A\otimes\sigma_B e^{i\omega\Delta t}]=e^{-i\rho\Delta t}\sigma e^{i\rho\Delta t}+\mathcal{O}((\Delta t)^2)$, where $\omega$ represents the $2n$-qubit swap operator. As a result, we can perform the quantum operation 
\begin{align}
    \mathbb{I}_0\otimes \sum\limits_{l=1}^{N_{\rm qpe}}e^{-i\rho l\Delta t}\otimes |l\Delta t\rangle_2\langle l\Delta t|_2
    \label{Eq:qpe}
\end{align}
on systems $1, 2$, then perform the inverse Quantum Fourier Transformation and square-root function to prepare the quantum state 
\begin{align}
     |\Phi\rangle=\frac{1}{\sqrt{L}}\sum\limits_{k=1}^L|k\rangle_0|\bm\tilde{S}_k^{\bar{t}}\rangle_1|\hat{V}^{\Bar{t}}(\bm\tilde{S}_k)\rangle_2.
     \label{Eq:target_state}
\end{align}
To provide an estimation of $\hat{V}^{\bar{t}}(\bm\tilde{S}_k)$ within an acceptable accuracy, we need the evolution time $\tau=N_{\rm qpe}\Delta t=\mathcal{O}(2^{m})$. The whole process is visualized as Fig.~\ref{fig:3}.

\begin{figure*}[h]
\centering
\includegraphics[width=0.95\textwidth]{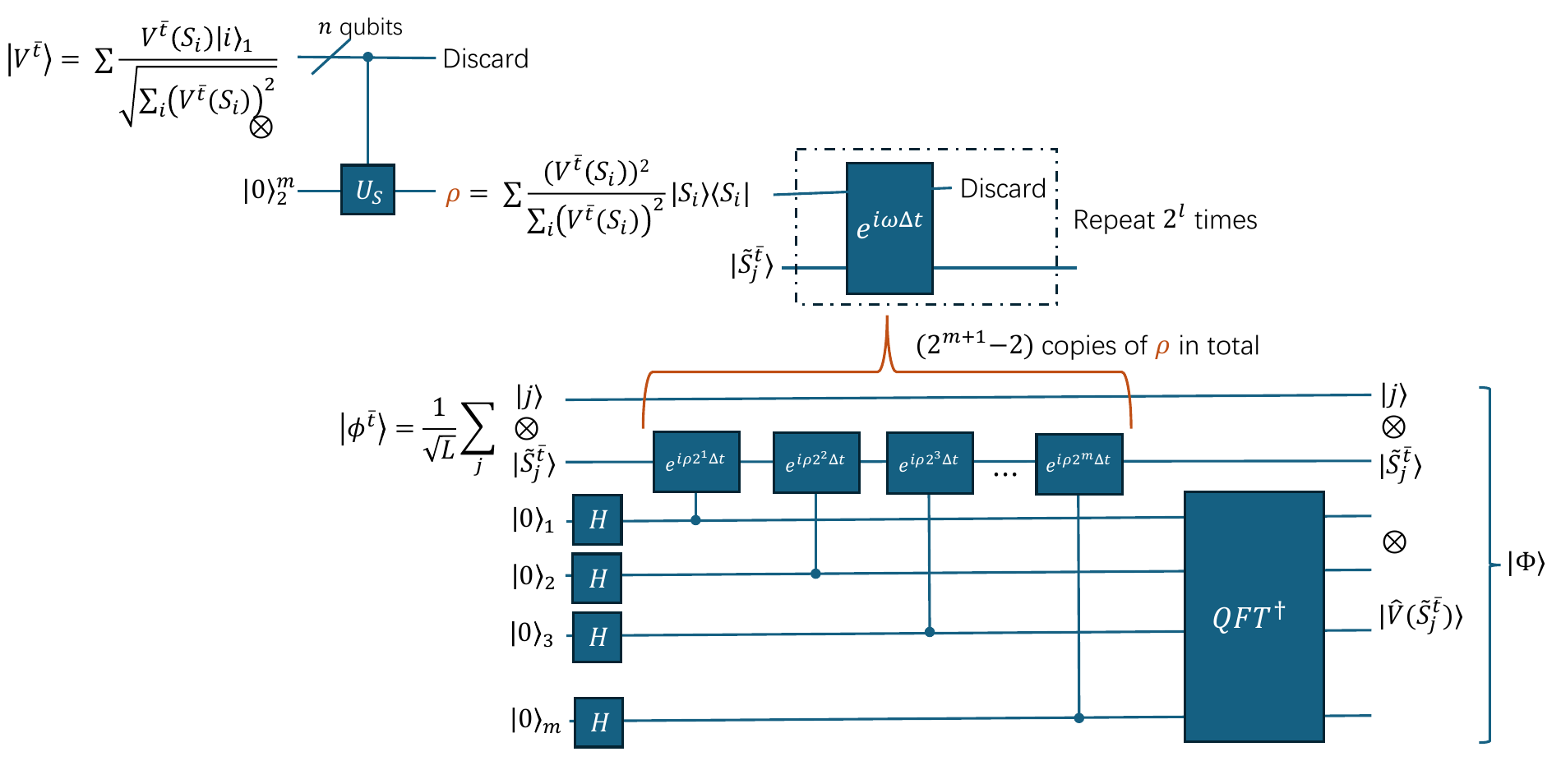} 
\caption{\textbf{Quantum Circuit for computing the distribution of portfolio value.} 
A quantum circuit in preparing the density matrix $\rho$ which is shown in Eq.~\ref{Eq:density_matrix} is shown in the top-left, where the quantum gate $U_S$ achieves the map $|i\rangle|0\rangle^m\mapsto |i\rangle|S_i\rangle$. To extract information hidden in the density matrix $\rho$, the QPCA is used in simulating the unitary operator $e^{i\rho2^l\Delta t}$, achieving the quantum phase estimation required for the mapping described by Eq.~\ref{Eq:target}.
}
\label{fig:3}
\end{figure*}

\section{A Quantum Algorithm for VaR/CVaR estimation}
\label{Sec:VII}
\subsection{VaR Estimation}
Here, we try to find the smallest value $\bar{V}\in\left[\min_{j\in[L]}(\hat{V}^{\bar{t}}(\tilde{S}_j)),\max_{j\in[L]}(\hat{V}^{\bar{t}}(\tilde{S}_j))\right]$ such that ${\rm Pr}_{j\in[L]}(\hat{V}^{\bar{t}}(\tilde{S}_j)\leq\Bar{V})=0.05$. To find $\bar{V}$ on a quantum computer, we utilize the bisection search method over $\bar{V}$~\cite{woerner2019quantum}. The bisection search approach depends on the unitary $U_{\rm CC}$ that achieves the map $$|\hat{V}^{\bar{t}}(\tilde{S}_j)\rangle|\Bar{V}\rangle|0\rangle\mapsto|\hat{V}^{\bar{t}}(\tilde{S}_j)\rangle|\Bar{V}\rangle|h\rangle,$$ where 
$h=0$ if
$\hat{V}^{\bar{t}}(\tilde{S}_j)\leq \Bar{V}$, otherwise $h=1$. Applying $U_{\rm CC}$ to the quantum state $|\Phi\rangle|\bar{V}\rangle|0\rangle$ produces
 \begin{align}
        \sqrt{p(0)}\sum\limits_{\hat{V}^{\bar{t}}(\tilde{S}_k)\leq\bar{V}}|k\rangle_0|\tilde{S}_k\rangle_1|\hat{V}^{\bar{t}}(\tilde{S}_j)\rangle_2|\bar{V}\rangle_3|0\rangle_4+\sqrt{p(1)}\sum\limits_{\hat{V}^{\bar{t}}(\tilde{S}_j)>\bar{V}}|k\rangle_0|\tilde{S}_k\rangle_1|V^{\bar{t}}(\tilde{S}_k)\rangle_2|\bar{V}\rangle_3|1\rangle_4.
    \end{align}
The probability of measuring $|0\rangle$ for the last qubit is $p(0)={\rm Pr}_{j\in[L]}(\hat{V}^{\bar{t}}(\tilde{S}_j)\leq\Bar{V})$. Therefore, with a bisection search over $\bar{V}$, we may efficiently find the smallest $\bar{V}$ such that ${\rm Pr}_{j\in[L]}(\hat{V}^{\bar{t}}(\tilde{S}_j)\leq\Bar{V})=0.05$ in at most $m$ steps, where $m$ represents the number of qubits in representing $\hat{V}^{\bar{t}}(\tilde{S}_j)$ given in the second register. Using the quantum mean value estimation~\cite{montanaro2015quantum}, we may estimate a $\epsilon$-approximation to $p(0)$ within $\mathcal{O}(1/\epsilon)$ queries to $|\Phi\rangle$ (as defined by Eq.~\ref{Eq:target_state}).

\subsection{CVaR Estimation}


Suppose we have estimated the VaR value $\bar{V}$ such that ${\rm Pr}_{j\in[L]}(\hat{V}^{\bar{t}}(\tilde{S}_j)\leq\bar{V})=5\%$. Then we can divide all $\hat{V}^{\bar{t}}(\tilde{S}_j)$ given in Eq.~\ref{Eq:target_state} into two sets: $\mathcal{L}_1=\{\hat{V}^{\bar{t}}(\tilde{S}_j)|\hat{V}^{\bar{t}}(\tilde{S}_j)\leq \bar{V}\}$ and $\mathcal{L}_2=\{\hat{V}^{\bar{t}}(\tilde{S}_j)|\hat{V}^{\bar{t}}(\tilde{S}_j)>\bar{V}\}$. As a result, CVaR could be estimated by
\begin{align}
    \frac{\sum_{\hat{V}^{\bar{t}}(\tilde{S}_j)\in\mathcal{L}_1}\hat{V}^{\bar{t}}(\tilde{S}_j)}{\|\mathcal{L}_1\|},
\end{align}
where $\|\mathcal{L}_1\|$ represents the number of entries in $\mathcal{L}_1$. 


The quantum algorithm in predicting CVaR can be summarized as follows.
\begin{itemize}
    \item After estimating the VaR of $\bar{V}$, we append ancillary qubits $|\bar{V}\rangle_3|0\rangle_4$ to $|\Phi\rangle$ (as defined by Eq.~\ref{Eq:target_state}), and perform the unitary $U_{\rm CC}$ to achieve
    \begin{align}
|\Phi^1\rangle=\sqrt{\frac{\|\mathcal{L}_1\|}{L}}\sum\limits_{\hat{V}^{\bar{t}}(\tilde{S}_k)\in\mathcal{L}_1}|k\rangle_0|\tilde{S}_k\rangle_1|\hat{V}^{\bar{t}}(\tilde{S}_j)\rangle_2|\bar{V}\rangle_3|0\rangle_4+\sqrt{\frac{\|\mathcal{L}_2\|}{L}}\sum\limits_{\hat{V}^{\bar{t}}(\tilde{S}_j)\in\mathcal{L}_2}|k\rangle_0|\tilde{S}_k\rangle_1|\hat{V}^{\bar{t}}(\tilde{S}_k)\rangle_2|\bar{V}\rangle_3|1\rangle_4.
    \end{align}
    \item Undo the QFT and QPCA process (as demonstrated in Sec.~\ref{Sec:VI}), 
    the quantum state becomes
    \begin{align}
        |\Phi^2\rangle=\sqrt{\frac{\|\mathcal{L}_1\|}{L}}\sum\limits_{\hat{V}^{\bar{t}}(\tilde{S}_j)\in\mathcal{L}_1}|k\rangle_0|\tilde{S}_k\rangle_1|0^m\rangle_2|\bar{V}\rangle_3|0\rangle_4+\sqrt{\frac{\|\mathcal{L}_2\|}{L}}\sum\limits_{\hat{V}^{\bar{t}}(\tilde{S}_j)\in\mathcal{L}_2}|k\rangle_0|\tilde{S}_k\rangle_1|0^m\rangle_2|\bar{V}\rangle_3|1\rangle_4.
    \end{align}
    \item Now trace over the 2nd and 3rd registers, we have $|\Phi^3\rangle\langle\Phi^3|={\rm Tr}_{2,3}[|\Phi^2\rangle\langle\Phi^2|]$, where
    \begin{align}
        |\Phi^3\rangle=\sqrt{\frac{\|\mathcal{L}_1\|}{L}}\sum\limits_{\hat{V}^{\bar{t}}(\tilde{S}_j)\in\mathcal{L}_1}|k\rangle_0|\tilde{S}_k\rangle_1|0\rangle_4+\sqrt{\frac{\|\mathcal{L}_2\|}{L}}\sum\limits_{\hat{V}^{\bar{t}}(\tilde{S}_j)\in\mathcal{L}_2}|k\rangle_0|\tilde{S}_k\rangle_1|1\rangle_4.
    \end{align}
    As a result, we can utilize the swap-test combined with quantum phase estimation to 
     provide an $\epsilon$-approximation to
    \begin{eqnarray}
        \begin{split}
            &\|\left(\langle\psi_2|\otimes\langle0|\right)|\Phi^3\rangle\|=\abs{\left(\frac{1}{\sqrt{\sum_{j}(V^{\Bar{t}}(\tilde{S}_j))^2}}\sum_{j\in\{0,1\}^n}V^{\Bar{t}}(S_j)\langle j|\langle\tilde{S}_j|\langle0|\right)|\Phi^3\rangle}\\
        &=\frac{\|\mathcal{L}_1\|^{1/2}}{L}\sum\limits_{\hat{V}^{\bar{t}}(\tilde{S}_k)\in\mathcal{L}_1}\hat{V}^{\bar{t}}(\tilde{S}_k),
        \end{split}
    \end{eqnarray}
    where the quantum state $|\psi_2\rangle$ is given in Eq.~\ref{Eq:LoadGridInformation}, and $\hat{V}^{\bar{t}}(\tilde{S}_k)=V^{\Bar{t}}(S_j)/\sqrt{\sum_{j}(V^{\Bar{t}}(\tilde{S}_j))^2}$. The above quantum state overlap can be efficiently estimated by utilizing a $\mathcal{O}(G/\epsilon)$-depth quantum circuit, where $G$ represents the quantum circuit complexity in preparing $|\psi_2\rangle$ and $|\Phi^3\rangle$.
    \item Recall that the selection of VaR $\bar{V}$ enabling ${\rm Pr}_{j\in[L]}(\hat{V}^{\bar{t}}(\tilde{S}_j)\leq\bar{V})=\|\mathcal{L}_1\|/L=5\%$, as a result, CVaR can be predicted by
    \begin{align} \label{eq:CVaR}
        {\rm CVaR}=\frac{\|\langle\psi_2|\otimes\langle0|\Phi_3\rangle\|}{(5\%)^{3/2}L^{1/2}}=\frac{\|\mathcal{L}_1\|^{1/2}}{(5\%)^{3/2}L^{3/2}}\sum\limits_{\hat{V}^{\bar{t}}(\tilde{S}_k)\in\mathcal{L}_1}\hat{V}^{\bar{t}}(\tilde{S}_k)=\frac{\sum_{\hat{V}^{\bar{t}}(\tilde{S}_j)\in\mathcal{L}_1}\hat{V}^{\bar{t}}(\tilde{S}_j)}{\|\mathcal{L}_1\|}.
    \end{align}
\end{itemize}

\section{Complexity Analysis}
Here, we summarize all involved complexity in Steps~1-4 as demonstrated above. In step~1, the QSVT framework is used to prepare the quantum state $|V^{\Bar{t}}\rangle$. As shown in Fig.~\ref{fig:1}~(a), a constant-depth quantum circuit $U_{\tilde{M}}$ suffices to provide a $(1,3,0)$-block-encoding of the $s$-sparse matrix $\tilde{M}/s$, given oracles $U_R$ and $U_c$. Using $U_{\tilde{M}}$ and the QSVT technique, a $(1,4,0)$-block-encoding to $P(\tilde{M})$ can be constructed, where the involved quantum circuit depth is upper bounded by $\mathcal{O}((T-\bar{t})\Delta\tau^{-1}\|\tilde{M}\|_2\log(1/\epsilon_1))$. The error $\epsilon_1$ represents the upper bound on
$\|\left(\langle10^{\otimes 3}|\otimes I_n\right)U_{\Phi} \left(|10^{\otimes 3}\rangle\otimes I_n\right)|V^T\rangle-|V^{\bar{t}}\rangle\|_2\leq\epsilon_1$. Noting that we essentially utilize the discretisation method to approximate the exact option value function $V$, as a result, $\Delta\tau$-length time slice may introduce $\epsilon_d=(\Delta\tau)^{2}$ additive error. Then the corresponding quantum circuit depth can be approximated by $\mathcal{O}((T-\bar{t})\epsilon_d^{-1/2}\|\tilde{M}\|_2\log(1/\epsilon_1))$.

In step~2, a quantum circuit implementation of ``classical MC" is implemented, which is independent of the result given in step~1. Here, a $\mathcal{O}(\bar{t}\Delta\tau^{-1})$-depth quantum circuit is designed to prepare the quantum state $|\phi^{\bar{t}}\rangle$. Each layer utilizes quantum control gates to compute functions $\mathcal{F}(\cdot)$ and
$\mathcal{F}^{-1}(\cdot)$, whose quantum gate complexity can be estimated by $\mathcal{O}(\log(L){\rm poly}(\log(1/\epsilon_2)))$ by using the Fourier-Transformation-based method~\cite{zhou2017quantum}, where $L$ represents the number of stock prices and $\epsilon_2$ represents the additive error in approximating functions $\mathcal{F}(\cdot)$ and
$\mathcal{F}^{-1}(\cdot)$.

Step~3 utilized the QPCA and QPE methods to combine option values (given by $|V^{\bar{t}}\rangle$) with the stock prices state $|\phi^{\bar{t}}\rangle$. To simulate $e^{-i\rho\tau}$, where the density matrix $\rho$ is provided by Eq.~\ref{Eq:density_matrix}, we first divide the evolution time $\tau$ into $N$ time-slice $\Delta t$, then utilize ${\rm Tr}_{A}[e^{-i\omega\Delta t}\rho_A\otimes\sigma_B e^{i\omega\Delta t}]$ to approximate $e^{-i\rho\Delta t}$ with $\mathcal{O}(\Delta t^2)$ additive error. Therefore, the total error in simulating $e^{-i\rho\tau}$ can be approximated by $\epsilon_3\leq N(\Delta t)^2=N(\tau/N)^2$, resulting in the sample complexity $N\leq\mathcal{O}(\epsilon_3^{-1}\tau^2)$. Furthermore, to encode the spectrum information of $\rho$ into the quantum state $|\phi^{\bar{t}}\rangle$, the QPE algorithm is required by utilizing the operator, as shown in Eq.~\ref{Eq:qpe}. To provide an estimation within $\epsilon$ error, the evolution time (in Eq.~\ref{Eq:qpe}) should satisfy $\left(N_{\rm qpe}\Delta t\right)=\tau$. As a result, the sample complexity on $|V^{\Bar{t}}\rangle$ in performing quantum phase estimation meanwhile generating $|\Phi\rangle$ is 
\begin{align}
    \mathcal{O}\left(\epsilon_3^{-1}\left(N_{\rm qpe}\Delta t\right)^{2}\right)=\mathcal{O}\left(\epsilon_3^{-3}\right).
    \label{Eq:QPCAComplexity}
\end{align}
The number of qubits is upper bounded by $\mathcal{O}(n+\log(1/\epsilon_3))$.

However, step~1 does not output the exact $|V^{\bar{t}}\rangle$ without any error. Actually, it outputs an $\epsilon_1$-approximation to $|V^{\bar{t}}\rangle$. Without loss of generality, we assume step~3 essentially performs QPCA on 
\begin{align}
    |\tilde{V}^{\bar{t}}\rangle=\frac{1}{\sqrt{1+\epsilon_1^2}}|V^{\bar{t}}\rangle+\frac{\epsilon_1}{\sqrt{1+\epsilon_1^2}}|e\rangle,
\end{align}
where $|e\rangle$ represents the error component. It is easy to verify that $\||\tilde{V}^{\bar{t}}\rangle-|V^{\bar{t}}\rangle\|_2\leq\epsilon_1$. Performing the gird computation gate (as shown in Eq.~\ref{Eq:LoadGridInformation}), the quantum system occupies 
\begin{align}
    |\tilde{\phi}^{\bar{t}}\rangle=\frac{1}{\sqrt{(1+\epsilon_1^2)\sum_{j}(V^{\Bar{t}}(\tilde{S}_j))^2}}\sum_{j\in\{0,1\}^n}V^{\Bar{t}}(S_j)|j\rangle_1|\tilde{S}_j\rangle_2+\frac{\epsilon_1}{\sqrt{1+\epsilon_1^2}}{\text{C}}U_S\left(|e\rangle_1|0^m\rangle_2\right).
\end{align}
Let the error component $|e\rangle=\sum_je_j|j\rangle$, we have
\begin{eqnarray}
\begin{split}
     {\rm Tr}_1\left[|\tilde{\phi}^{\bar{t}}\rangle\langle\tilde{\phi}^{\bar{t}}|\right]&=\frac{1}{(1+\epsilon_1^2)\sum_{j}(V^{\Bar{t}}(\tilde{S}_j))^2}\sum\limits_{j}V^{\bar{t}}(\tilde{S}_j)^2|\tilde{S}_j\rangle\langle \tilde{S}_j|+\frac{\epsilon_1^2}{1+\epsilon_1^2}\sum_ke_k^2|\tilde{S}_k\rangle\langle\tilde{S}_k|\\
     &+\frac{\epsilon_1}{(1+\epsilon_1^2)}{\rm Tr}_1\left[|V^{\bar{t}}\rangle_{12}\langle\psi|_{12}\right]+\frac{\epsilon_1}{(1+\epsilon_1^2)}{\rm Tr}_1\left[|\psi\rangle_{12}\langle V^{\bar{t}}|_{12}\right],
\end{split}
\end{eqnarray}
where $|\psi\rangle={\text{C}}U_S\left(|e\rangle_1|0^m\rangle_2\right)$. This naturally results in
\begin{align}
    \left\| {\rm Tr}_1\left[|\tilde{\phi}^{\bar{t}}\rangle\langle\tilde{\phi}^{\bar{t}}|\right]- {\rm Tr}_1\left[|\phi^{\bar{t}}\rangle\langle\phi^{\bar{t}}|\right]\right\|_2\leq \frac{\epsilon_1^2}{1+\epsilon_1^2}\left\|{\rm Tr}_1\left[|\phi^{\bar{t}}\rangle\langle\phi^{\bar{t}}|\right]\right\|_2+\frac{\epsilon_1^2}{1+\epsilon_1^2}+\frac{2\epsilon_1}{1+\epsilon_1^2}\leq 4\epsilon_1.
\end{align}
The demonstrated $L_2$ norm difference upper bound implies all eigenvalues of ${\rm Tr}_1\left[|\phi^{\bar{t}}\rangle\langle\phi^{\bar{t}}|\right]$ can be approximated by that of ${\rm Tr}_1\left[|\tilde{\phi}^{\bar{t}}\rangle\langle\tilde{\phi}^{\bar{t}}|\right]$ with at most $4\epsilon_1$ additive error. Combining this error analysis with the sample complexity given in Eq.~\ref{Eq:QPCAComplexity}, let $\epsilon_1=\epsilon/8$ and $\epsilon_3=\epsilon/2$, then repeat step~1 $\mathcal{O}(8\epsilon^{-3})$ times. This suffices to prepare an $\epsilon$-approximation to $|\Phi\rangle$.

Finally, step~4 utilized the binary search program and mean value estimation algorithm, where each iteration step requires $\mathcal{O}(\epsilon_4^{-1})$ copies of $|\Phi\rangle$ to estimate VaR and CVaR. Combining all steps together, the quantum circuit depth can be approximated by 
\begin{align}
    \mathcal{O}\left(\frac{\bar{t}\epsilon_d^{-1/2}\log(L){\rm poly}(\log(1/\epsilon))}{\epsilon}+\frac{8(T-\bar{t})\epsilon_d^{-1/2}\|\tilde{M}\|_2\log(8/\epsilon)}{\epsilon^4}\right).
\end{align}
We summarize the required quantum computational resources for each step in Table~\ref{tab:Compare}.

\begin{table}[h]
\caption{Quantum Computational Resources Summary}
 \label{tab:Compare}
\begin{center}
\begin{tabular}{c|cccc}
    \hline\hline
   \textbf{Alg. Steps} & \textbf{Assumption} & \textbf{Circuit Depth} & \textbf{Post-Selection} & \textbf{Sample Complexity}\\
    \hline
    \multirow{1}{*}{Step~1} & Efficient access to $\tilde{M}$ & $\mathcal{O}(T-\bar{t})\Delta\tau^{-1}\|\tilde{M}\|_2\log(8/\epsilon))$ & $\mathcal{O}(1)$ & $\mathcal{O}(1)$\\
    \hline
    \multirow{1}{*}{Step~2} & --- & $\mathcal{O}(\bar{t}\Delta\tau^{-1}\log(L){\rm poly}(\log(1/\epsilon)))$ & --- & $\mathcal{O}(1)$\\
    \hline
    \multirow{1}{*}{Step~3} & --- & $\mathcal{O}(\log(1/\epsilon))$ & --- & $\mathcal{O}(8\epsilon^{-3})$-copies of $|V^{\bar{t}}\rangle$~(Eq.~\ref{Eq:Vt})\\
    \hline
    \multirow{1}{*}{Step~4} & --- & $\mathcal{O}(1/\epsilon)$ & --- & $\mathcal{O}(\log(1/\epsilon))$\\
    \hline\hline
    \end{tabular}
\end{center}
\end{table}

\section{American Option Values}
For American option pricing, finite difference approximation of the partial differential equation complementarity problem \eqref{compl} leads to a linear complementarity problem.
While more sophisticated iterative methods, e.g., penalty method \cite{ForsythPenalty02} and Newton method \cite{CLV02} can be applied to compute the solution of this linear complementarity problem, here we attempt to implement a rudimentary 
method that explicitly handles the inequality payoff constraint, as described in Alg.~\ref{AmericanOptionValue}. Note that, if explicit finite difference approximation is used, this 
simple scheme does solve the resulting linear complementarity finite difference approximation of Eq.~\ref{compl}.

\begin{algorithm}
\label{AmericanOptionValue}
\caption{American Option Value Computation}
\textbf{Input:} $\vec{V}^T=(V_0^T,V_1^T,\cdots, V_{2^n-1}^T)=(\payoff(S_0),\payoff(S_1),\cdots, \payoff(S_{2^n-1}))$, matrix $M$, $\bar{t}$.\\
\textbf{Output:} $\vec{V}^0$\\
{\textbf{for}} $t=T, T-\Delta\tau\ldots, \bar{t}+\Delta\tau, \bar{t}$\\
   \quad \quad \textbf{solve} ~ $ \hat{V}^{t-\Delta\tau} = [I+M]^{-1}V^{t}$.\\
    \quad \quad \textbf{for}~$j=0,\ldots,2^n-1$\\
   \quad \quad  \quad \quad $V_j^{t-\Delta\tau} =  \max(\payoff(S_j), \hat{V}_j^{t-\Delta\tau})$\\
  \quad \quad {\textbf{End for}}\\
    {\textbf{End for}}\\
\end{algorithm}

As shown in Alg.~\ref{AmericanOptionValue}, 
the option value $V^{t}$ is updated by two steps: (i) solving a linear system $(I+M)\hat{V}^{t-\Delta\tau}=V^{t}$ and (2) taking the higher value of $\hat{V}_j^{t-\Delta\tau}$ and $\payoff(S_j)$. Now encode the option values in the amplitudes in the same way as the European option values given in Eq.~\ref{Eq:Vt}, and the representative quantum state is then $|V^{t}\rangle=\sum_jV_j^{t}|j\rangle$.  Given an efficient oracle to access entries within the matrix $M$, there are several efficient quantum algorithms for preparing the quantum state $|\hat{V}^{t-\Delta\tau}\rangle=[I+M]^{-1}|V^{t}\rangle$. However, we argue that achieving the comparison between  $\hat{V}_j^{t-\Delta\tau}$ and $\payoff(S_j)$ is quantum hard. 

\begin{definition}[Amplitude Maximum~(AM) Problem]
    Given $N$ copies of valid quantum states $|x\rangle=\sum_{i=0}^{2^n-1}x_i|i\rangle$ and $|y\rangle=\sum_{i=0}^{2^n-1}y_i|i\rangle$ with $n$ qubits, where $x_i,y_i\in\mathbb{R}$, prepare the quantum state 
    \begin{align}
        |z\rangle=\frac{\sum\limits_{i=0}^{2^n-1}\max(x_i,y_i)|i\rangle}{\sqrt{\sum\limits_{i=0}^{2^n-1}\max^2(x_i,y_i)}}
        \label{Eq:max_z}
    \end{align}
    by using a quantum computer.
\end{definition}

In what follows, we demonstrate the difficulty of solving the American option problem using the above proposed algorithm with quantum computers. In other words, the method described above would necessitate the same $\mathcal{O}(2^n)$ operations as classical computing, i.e. without any quantum advantage. 



\begin{theorem}
    Given $N$ copies of valid quantum states $|x\rangle=\sum_{i=0}^{2^n-1}x_i|i\rangle$ and $|y\rangle=\sum_{i=0}^{2^n-1}y_i|i\rangle$ with $n$ qubits and real amplitudes $x_i,y_i\in\mathbb{R}$, any quantum algorithm that generates $|z\rangle$ (as shown in Eq.~\ref{Eq:max_z}) will require $N\geq\Omega(2^n)$ samples.
\end{theorem}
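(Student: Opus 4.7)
The plan is to lower-bound $N$ via a reduction to quantum state distinguishability, following the template: exhibit two input pairs $(|x\rangle,|y^{(0)}\rangle)$ and $(|x\rangle,|y^{(1)}\rangle)$ whose target outputs $|z^{(0)}\rangle$ and $|z^{(1)}\rangle$ are separated by a constant trace distance, while the inputs have single-copy fidelity $|\langle y^{(0)}|y^{(1)}\rangle|^2\geq 1-O(2^{-n})$. Any algorithm that produces $|z\rangle$ using $N$ copies of $|y\rangle$ would then, by the data-processing inequality applied to the post-processing channel that runs the algorithm and then performs the optimal Helstrom measurement on the output, allow one to distinguish the two input hypotheses. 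Hence the $N$-copy fidelity $(1-O(2^{-n}))^N$ must drop below a constant, forcing $N=\Omega(2^n)$.

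Concretely, I would first fix $|x\rangle=2^{-n/2}\sum_i|i\rangle$ and parametrize $|y^{(b)}\rangle$ as balanced sign perturbations of $|x\rangle$ with per-coordinate amplitude $\eta=\Theta(2^{-n/2})$, chosen so that the two sign patterns differ on $\Theta(2^n)$ coordinates. A direct computation then gives single-copy fidelity $1-\Theta(2^{-n})$, so the $N$-copy product fidelity is $e^{-\Theta(N/2^n)}$. Second, I would analyze $|z^{(b)}\rangle$ coordinate by coordinate: at sites where $y^{(b)}_i>x_i$ the entry-wise $\max$ selects $y^{(b)}_i$, at sites where $y^{(b)}_i<x_i$ it selects $x_i$, so the sign pattern is coherently imprinted onto $|z^{(b)}\rangle$.

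The main obstacle I foresee is precisely this amplification step. Because the entry-wise $\max$ is $1$-Lipschitz in its two arguments, the unnormalized outputs obey $\|\tilde z^{(0)}-\tilde z^{(1)}\|\leq\|y^{(0)}-y^{(1)}\|$, so no naive pointwise comparison produces an output distance larger than the input distance. The required constant separation of the outputs must therefore come from the global renormalization of $|z\rangle$, which is non-linear and not Lipschitz-controlled, or from carefully using a family of more than two $|y_k\rangle$'s so that the ensemble trace-distance is larger than any pairwise one. An alternative, perhaps more robust route is information-theoretic: since $|z\rangle$ implicitly encodes the $2^n$-bit classical comparison vector $\{\mathrm{sign}(y_i-x_i)\}_{i=0}^{2^n-1}$ while each copy of an $n$-qubit state carries at most $n$ accessible bits by Holevo's theorem, any algorithm producing $|z\rangle$ coherently must consume $\Omega(2^n/n)$ copies, matching the theorem up to a logarithmic factor.

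Combining either route with the Fuchs-van-de-Graaf inequality and data-processing then delivers the claimed $N=\Omega(2^n)$ bound; the detailed verification, and in particular the explicit construction simultaneously satisfying input-closeness and output-separation, would constitute the technical heart of the proof.
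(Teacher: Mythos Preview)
Your high-level strategy (reduce to two-hypothesis distinguishability, then use data-processing) is exactly what the paper does, and you correctly diagnose the obstacle: since $\max$ is $1$-Lipschitz coordinatewise, small perturbations of the amplitudes of $|y\rangle$ cannot by themselves produce a constant separation of the outputs. Where your proposal stalls, however, the paper supplies a simple trick that you have not found, and without it the argument does not close.

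The missing idea is to exploit global-phase invariance rather than small perturbations. The paper fixes $|y\rangle=|1\rangle^{\otimes n}$ and compares the two ``$x$'' inputs
\[
|\phi\rangle=|0\rangle^{\otimes n}
\quad\text{versus}\quad
|\psi\rangle=-\sqrt{\tfrac{d-1}{d}}\,|0\rangle^{\otimes n}+\sqrt{\tfrac{1}{d}}\,|1\rangle^{\otimes n},\qquad d=2^n.
\]
As \emph{quantum states} these are almost identical, $|\langle\phi|\psi\rangle|^2=1-1/d$, because $|\psi\rangle$ is essentially $-|0\rangle^{\otimes n}$ and the overall sign is unobservable. But as \emph{real amplitude vectors} they differ drastically in the first coordinate ($+1$ versus $\approx -1$), and the entrywise $\max$ with $(0,\dots,0,1)$ sends them to $(1,0,\dots,0,1)/\sqrt{2}$ and $(0,\dots,0,1)$ respectively --- outputs with constant trace distance. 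This is precisely the non-Lipschitz amplification you were looking for: it comes not from renormalization but from the mismatch between the quantum metric (insensitive to global phase) and the real-amplitude metric on which $\max$ acts. Your balanced-sign construction with $\eta=\Theta(2^{-n/2})$ cannot achieve this, because there every amplitude is small and the Lipschitz bound genuinely bites.

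Your fallback Holevo route also has a gap as stated. You argue that $|z\rangle$ ``encodes'' the $2^n$-bit comparison vector and that each input copy carries at most $n$ accessible bits, hence $N=\Omega(2^n/n)$. But $|z\rangle$ is itself an $n$-qubit state, so by the same Holevo bound at most $n$ bits are extractable from it; the information-counting inequality you wrote down is consistent with $N=O(1)$. To make an information-theoretic argument work you would still need a concrete task solvable from a single copy of $|z\rangle$ but requiring many copies of the inputs --- which brings you back to the two-hypothesis construction above.
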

\begin{proof}
    We first consider the sample complexity of a quantum state classification problem: how many samples suffice to distinguish quantum states $|\psi\rangle=-\sqrt{\frac{d-1}{d}}|0\rangle^n+\sqrt{\frac{1}{d}}|1\rangle^n$ and $|\phi\rangle=|0\rangle^n$, where $d=2^n$. It is known that quantum states $\rho$ and $\sigma$ are distinguishable if $\|\rho-\sigma\|_1\geq0.8$~\cite{cotler2021revisiting}. Now suppose we have $m$ copies of $|\psi\rangle$ and $|\phi\rangle$, then
    \begin{align}
        \|(|\psi\rangle\langle\psi|)^{\otimes m}-(|\phi\rangle\langle\phi|)^{\otimes m}\|_1\geq 0.8
    \end{align}
    may directly result in $\sqrt{1-(1-1/d)^m}\geq0.8$, equivalently $m\geq\Omega(2^n)$.

    Now suppose there exists a quantum algorithm $\mathcal{A}$ that can solve the AM problem with $N<\mathcal{O}(2^n)$ copies of $|x\rangle$ and $|y\rangle$, that is $\mathcal{A}((|x\rangle\langle x|)^{\otimes N},(|y\rangle\langle y|)^{\otimes N})=|z\rangle\langle z|$. Let the testing quantum state $|{\rm test}\rangle=|1\rangle^n$, and consider a classification program 
    \begin{align}
        \mathcal{A}(\cdot, |{\rm test}\rangle\langle{\rm test}|).
    \end{align}
 This leads to 
    \begin{itemize}
        \item $\mathcal{A}(|\psi\rangle\langle\psi|^{\otimes N}, |{\rm test}\rangle\langle{\rm test}|^{\otimes N})=(0,0,\cdots,0,1)=|1\rangle^n$
        \item $\mathcal{A}(|\phi\rangle\langle\phi|^{\otimes N}, |{\rm test}\rangle\langle{\rm test}|^{\otimes N})=\left(\frac{1}{\sqrt{2}},0,\cdots,0,\frac{1}{\sqrt{2}}\right)=\frac{1}{\sqrt{2}}\left(|0\rangle^n+|1\rangle^n\right)$ 
    \end{itemize}
    which imply $|\psi\rangle$ and $|\phi\rangle$ could be distinguished by a quantum algorithm with $N<2^n$ copies and constant number of computational basis measurements. This results in a contradiction.
\end{proof}

\section{Conclusion}

In finance, even modest improvements in computational speed and model performance can have a substantial effect on the profitability of a business. For instance, fast and accurate evaluation of the risk metrics in derivatives trading is crucial in effectively hedging the risks especially under volatile market conditions.

This paper introduces an efficient end-to-end quantum algorithm for predicting the VaR/CVaR of a portfolio of European options. Specifically, given the backward propagation time target $\bar{t}$ and accuracy $\epsilon$, our quantum algorithm takes market sensitive parameters as inputs and generates a $\epsilon$-approximate VaR/CVaR by running a quantum algorithm with $\tilde{\mathcal{O}}(\max\{\bar{t},T-\bar{t}\}\epsilon_d^{-1/2}\epsilon^{-4})$ time complexity, where $\epsilon_d$ represents the error induced by the discretization approach. The essential quantum speed-up relies on matching the stock price $\tilde{S}_j^{\bar{t}}$ to its corresponding option value $\hat{V}(\tilde{S}_j^{\bar{t}})$, where the option value function lives in a high-dimensional space induced by discretization. In general, classical approaches would require transverse all option values in the look-up table, however, QPCA and QPE provide an efficient approach to project all concerned option value functions to the concerned stock prices.


This work opens new avenues for further research. For instance, we have only considered the single stock option pricing in this paper to demonstrate potential quantum advantages of the proposed algorithm. Multi-stock scenarios naturally induce a high-dimensional fair option value PDE. The curse of dimensionality and the complex correlations between different stocks within this PDE pose formidable challenges to classical algorithms. Therefore, extending this quantum algorithm to multiple underlying stocks is an obvious next step. Furthermore, analyzing whether different discretization strategies for the American option PDE can bypass the established 'no-go' theorem is a topic deserving further investigation.

\clearpage
\bibliography{ref}

\section*{Acknowledgements}
The authors wish to express gratitude to Peter Forsyth, Song Wang, Robert Scriba, Tavis Bennett, Sisi Zhou, Des Hill for valuable discussions. Y.~Wu is supported by the China Scholarship Council (Grant No.~202006470011) and UWA-CSC HDR Top-Up Scholarship (Grant No.~230808000954).

\end{document}